\newtheorem{theorem}{Theorem}
\newtheorem{lemma}[theorem]{Lemma}
\newtheorem{definition}[theorem]{Definition}
\newcommand{\shortcut}[1]{}
\newcommand{\say}[1]{} 
\newcommand{\opt}{\mathrm{OPT}}
\title{Network Design Problems with Bounded Distances via Shallow-Light Steiner Trees}
\author{
Markus Chimani\\
Theoretical Computer Science, Osnabr\"uck University, Germany\\
\url{markus.chimani@uni-osnabrueck.de}\\
\and
Joachim Spoerhase\\
Institute of Computer Science, University of W\"urzburg, Germany\\
\url{joachim.spoerhase@uni-wuerzburg.de}
}
\begin{document}


\maketitle

\begin{abstract}
  In a directed graph $G$ with non-correlated edge lengths and costs, the \emph{network 
design problem with bounded distances} asks for a cost-minimal spanning subgraph subject 
to a length bound for all node pairs.  We give a bi-criteria 
$(2+\varepsilon,O(n^{0.5+\varepsilon}))$-approximation for this problem. This improves on 
the currently best known linear approximation bound, at the cost of violating the 
distance bound by a factor of at most~$2+\varepsilon$.

  In the course of proving this result, the related problem of \emph{directed shallow-light 
Steiner trees} arises as a subproblem. In the context of directed 
graphs, approximations to this problem have been elusive. We present the first non-trivial 
result by proposing a $(1+\varepsilon,O(|R|^{\varepsilon}))$-ap\-proxi\-ma\-tion, where $R$ are the terminals.
  
  Finally, we show how to apply our results to obtain an $(\alpha+\varepsilon,O(n^{0.5+\varepsilon}))$-approximation for \emph{light-weight directed 
$\alpha$-spanners}. For this, no non-trivial approximation algorithm has 
been known before.  All running times depends on $n$ and
  $\varepsilon$ and are polynomial in $n$ for any fixed
  $\varepsilon>0$.
\end{abstract}

\clearpage

\section{Introduction}

We consider the following network design problem introduced by Dodis and
Khanna \cite{dodis-khanna99:network-design-bounded-dist}: 
\begin{definition}[Directed Network Design with Bounded Distances]\label{def:dndbd}
Given a directed graph $G=(V,E)$, an edge cost function
$c\colon E\to\mathbb{N}$, an edge length function $\ell\colon
E\to\mathbb{N}$, and a length bound $L\in\mathbb{N}$.  We ask for a 
spanning subgraph $H$ of $G$ of minimum cost (with
respect to $c$) such that for each node pair $u,v$ the distance
in $H$ (with respect to $\ell$) is at most~$L$. 
\end{definition}

Generally, for a given graph $G=(V,E)$, we let $n:=|V|$ and $m:=|E|$; $\bar\ell_H(u,v)$ denotes the lengths of the 
shortest $u$-$v$ path in $H\subseteq G$ with respect to $\ell$.
For uniform edge costs and lengths, Dodis and Khanna \cite{dodis-khanna99:network-design-bounded-dist}
devise an $O(\log n\log L)$-approximation. For non-uniform edge costs, they show 
$\Omega(2^{\log^{1-\varepsilon}n})$-hardness of approximation, and propose
an $O(n\log L)$-approximation under the restriction that the edge lengths are polynomially bounded.
Up to now, no improved algorithm is known.

In this paper (Section~\ref{sec:network-design-bounded}), we give an algorithm 
for this problem, without any of the above restrictions and without ratio-dependency on $L$, achieving essentially 
a performance ratio $O(\sqrt n)$ while 
violating the distance bound~$L$ by a factor of at most~$2+\varepsilon$. 
\begin{theorem}
There is a bi-criteria $(2+\varepsilon,O(n^{1/2+\varepsilon}))$-approximation for the above directed network design problem with bounded distances.
\end{theorem}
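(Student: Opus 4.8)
The plan is to express the all-pairs length requirement through a small collection of single-source problems, each of which is a directed shallow-light Steiner tree instance, and then to feed these into the $(1+\varepsilon,O(|R|^{\varepsilon}))$-approximation stated above with terminal set $R=V$ (so $|R|^{\varepsilon}=n^{\varepsilon}$). Fix an optimal solution $\opt$ and write $c^{*}=c(\opt)$. For each ordered pair $(u,v)$ pick a witnessing $u$-$v$ path $P_{uv}$ in $\opt$ with $\ell(P_{uv})\le L$, and call the pair \emph{global} if $P_{uv}$ uses at least $\sqrt n$ edges and \emph{local} otherwise; the threshold $\sqrt n$ is what eventually balances the two regimes to $O(n^{1/2+\varepsilon})$.

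For the global pairs, sample a set $S\subseteq V$ of $\Theta(\sqrt n\log n)$ vertices uniformly at random; then with high probability every witnessing path of a global pair contains a vertex of $S$. For a fixed $s\in S$, the subgraph of $\opt$ induced by the vertices to which $s$ has an $L$-bounded path is a feasible solution to the shallow-light Steiner instance with root $s$, terminal set $V$, and depth bound $L$, of cost at most $c^{*}$; applying the approximation yields an out-branching $A_s^{+}$ of cost $O(n^{\varepsilon})c^{*}$ with $\bar\ell_{A_s^{+}}(s,v)\le(1+\varepsilon)L$ for all $v$. Running the same argument on the reversed graph gives an in-branching $A_s^{-}$. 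Taking $H_{\mathrm{global}}:=\bigcup_{s\in S}(A_s^{+}\cup A_s^{-})$ costs $O(\sqrt n\log n\cdot n^{\varepsilon})\,c^{*}=O(n^{1/2+\varepsilon})\,c^{*}$ (the logarithmic factor being absorbed into $n^{\varepsilon}$), and any global pair $(u,v)$ whose witnessing path passes through some $s\in S$ satisfies $\bar\ell_{H_{\mathrm{global}}}(u,v)\le\bar\ell_{A_s^{-}}(u,s)+\bar\ell_{A_s^{+}}(s,v)\le 2(1+\varepsilon)L$; rescaling $\varepsilon$ turns this into violation $2+\varepsilon$.

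The local pairs are the crux, since a path with fewer than $\sqrt n$ edges need not meet an $\tilde O(\sqrt n)$-size random sample, so hub routing alone fails here. The plan is to work from the natural flow LP relaxation --- capacity variables $x_e\ge 0$ together with, for every pair, a unit $u$-$v$ flow supported only on paths of $\ell$-length at most $L$ and dominated by $x$ --- whose optimum is at most $c^{*}$ because $\opt$ is feasible for it. Scaling $x$ up by $\Theta(\sqrt n\log n)$ and buying every edge that thereby reaches capacity $1$ costs $O(n^{1/2+\varepsilon})\,c^{*}$ and connects, within length $L$, every pair whose short flow avoids the remaining low-capacity edges; a pair that is not connected this way certifies a ``light'' $L$-bounded cut, and one argues --- this is the delicate step --- that for a \emph{local} pair such a cut would force its witnessing path of fewer than $\sqrt n$ edges to spread its unit of flow over more than $\sqrt n$ nearly empty edges, a contradiction, so the rounded solution already serves all local pairs within $L$. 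Should this clean dichotomy leak a bounded number of pairs, they can be mopped up by a dedicated local augmentation designed to respect both the $O(n^{1/2+\varepsilon})$ cost budget and the $(2+\varepsilon)L$ length budget, and I expect the bulk of the technical effort to lie exactly here, in making the local analysis and its interaction with the $2+\varepsilon$ violation watertight. Combining $H_{\mathrm{global}}$ with the rounded local solution gives a spanning subgraph of cost $O(n^{1/2+\varepsilon})\,c^{*}$ in which every node pair is at distance at most $(2+\varepsilon)L$; since the shallow-light Steiner tree routine runs in time polynomial in $n$ for every fixed $\varepsilon$ and is invoked $O(\sqrt n\log n)$ times, the overall algorithm is polynomial.
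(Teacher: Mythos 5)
Your handling of the ``global''/thick pairs is essentially the paper's: sample $\Theta(\sqrt n\log n)$ hubs, build a shallow-light out-tree and in-tree per hub with bound $L$, route through the hub, pay $2(1+\varepsilon)L$. That part is fine. The genuine gap is in the local-pair half, and it stems from two compounding problems.

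First, your classification is the wrong one for the LP-rounding step to work. You declare $(u,v)$ local when a \emph{single} witnessing path inside an optimal solution has fewer than $\sqrt n$ edges. The paper instead calls a pair \emph{thin} when $V_{uv}:=\bigcup_{P\in\mathcal{P}^L_{uv}}V(P)$, the union of vertices over \emph{all} $u$--$v$ paths of $\ell$-length at most $L$ in the input graph $G$, has at most $\sqrt n$ vertices. This structural bound on $V_{uv}$ is exactly what makes the rounding analysis go through: a ``$u$--$v$-stretching cut'' is shown to be determined by a shortest-path tree on the vertex set $V_{uv}$, so the number of such cuts is at most $\sqrt n^{\sqrt n}$, and a union bound over $n^2\cdot\sqrt n^{\sqrt n}$ events with failure probability $n^{-\sqrt n}$ each closes the argument. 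With your ``local'' definition, a pair can have a short witnessing path in $\opt$ and yet $V_{uv}$ (hence the number of relevant cuts) can be arbitrarily large, so no analogous counting bound is available; and symmetrically, a pair whose every witnessing path in $\opt$ is long could still have $|V_{uv}|\le\sqrt n$, meaning your hub sampling has no reason to hit it. Your two regimes do not cover all pairs.

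Second, your rounding step itself is not sound as written. You propose deterministic threshold rounding: scale $x$ by $\Theta(\sqrt n\log n)$ and buy every edge that reaches capacity $1$. A length-bounded cut of fractional value $\ge1$ may consist of many edges each with tiny $x_e$, so none gets bought, and the pair remains unsettled; deterministic thresholding gives you no guarantee here. The paper uses \emph{randomized} rounding (buy $e$ with probability $\min(\gamma x_e,1)$) so that a cut of LP value $\ge1$ is missed with probability at most $n^{-\sqrt n}$, and then uses the cut-counting bound above for the union bound. Your claimed ``contradiction'' --- that a local pair's witnessing path of fewer than $\sqrt n$ edges would have to spread its unit of flow across more than $\sqrt n$ nearly-empty edges --- does not follow: the LP flow for $(u,v)$ is free to route on entirely different $L$-bounded paths than the witnessing path in $\opt$, so the fact that $\opt$ contains a short path says nothing about which edges carry the LP flow. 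You explicitly acknowledge the hole (``should this clean dichotomy leak a bounded number of pairs, they can be mopped up\dots''), but the mop-up is not specified and there is no reason the number of leaked pairs is bounded. The fix is to adopt the $|V_{uv}|\le\sqrt n$ definition of thin, use randomized rounding of the path LP, and prove the $\sqrt n^{\sqrt n}$ bound on stretching cuts; your global half can then stay as is (it handles every pair with $|V_{uv}|>\sqrt n$, since $V_{uv}$ large guarantees the sample hits it).
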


As a starting point, our algorithm uses a two-stage approach originally proposed by Feldman et al.\
\cite{feldman-etal12:dsf} for directed Steiner forest, which has later
been reused for directed spanners
\cite{bhattacharyya-etal:tcspanners,dinitz11:directed-spanners,berman-etal:apx-spanners-dsf}.  We
divide the considered node pairs into \emph{thin} and \emph{thick}
pairs. We settle the former by LP-rounding, as we have to cover
certain cuts w.r.t.\ shortest paths. For the latter, we sample nodes
and construct short in- and out-trees for each of them. This latter
part is a main technical challenge: In contrast to the case of sparse
spanners, we cannot simply use shortest-path trees, as they could have
arbitrarily high costs.
To solve this issue, we turn our attention to a second problem, which is also of independent interest:
\begin{definition}[Directed Shallow-Light Steiner Trees]
Given a directed graph $G=(V,E)$, an edge cost function $c\colon E\rightarrow\mathbb{N}$, an edge length function $\ell\colon E\rightarrow\mathbb{N}$, a distinguished root node $r\in V$, and
a set $R\subseteq V$ of terminals with distance bounds $d\colon R\to\mathbb{N}$. We ask for an
$r$-rooted subtree $T$ of $G$ of minimum cost (with respect to~$c$)
such that for any terminal $v\in R$ the distance $\bar\ell_T(r,v)$ in $T$ (with respect to~$\ell$) is at most~$d(v)$.
\end{definition}

Kortsarz and Peleg~\cite{kortsarz-peleg:slt} gave an $O(|R|^{\varepsilon})$-approximation for undirected graphs with uniform edge lengths and uniform distance bounds.
The directed problem with non-uniform edge costs has formerly been considered in \cite{naor97:wrong-approx-slst}, where a bi-criteria $(2,O(\log n))$-approximation for directed shallow-light \emph{spanning} trees (that is, $R=V$) was proposed.
Unfortunately, the proof is intrinsically flawed\footnote{Verified by personal communication with J.\ Naor.}, and there has not been any progress on the problem since.  We propose the first non-trivial result for the general directed problem (cf.\ Section~\ref{sec:direct-shall-light}). In fact, at the cost of violating the length bounds by a factor of at most $(1+\varepsilon)$, we obtain the same approximation ratio as~\cite{kortsarz-peleg:slt}, but for directed graphs and without the restrictions to uniform lengths and costs:
\begin{theorem}\label{thm:slst}
There is a bi-criteria  $(1+\varepsilon,|R|^{\varepsilon})$-approximation for directed shallow-light Steiner trees.
\end{theorem}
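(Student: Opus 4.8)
The plan is to follow the classical recursive/height-reduction strategy of Kortsarz--Peleg for shallow-light trees, but to cope with the directed, non-uniform setting by working with a *distance-constrained LP relaxation* rather than combinatorial shortest-path structure. First, I would guess the optimum cost $\opt$ by binary search (polynomially many candidates suffice after rounding), so that we may assume we know a target budget $B$ with $\opt \le B$. The LP has a variable $x_e$ for each edge; connectivity is enforced via an exponential family of cut constraints (each terminal $v$ must have one unit of flow from $r$), and the length bound is enforced by restricting attention to the subgraph $G_v$ of edges that can lie on an $r$-$v$ path of length at most $(1+\varepsilon)\,d(v)$ — or, more robustly, by using a flow formulation in a length-layered (time-expanded) graph so that any fractional $r$-$v$ flow is automatically supported on paths of length at most $d(v)$. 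Rounding such an LP once gives, by standard directed-Steiner-tree-style arguments (Zelikovsky / Charikar et al.\ density arguments, or the Feldman et al.\ LP), a partial tree of cost $O(B)$ that reaches a constant fraction, or a $|R|^{1/h}$-fraction, of the terminals while respecting the $(1+\varepsilon)$ length slack.

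The heart of the argument is the recursion that turns such a "covers many terminals cheaply" primitive into a full $|R|^{\varepsilon}$-approximation. Fix an integer height parameter $h = \lceil 1/\varepsilon \rceil$. I would prove, by induction on $h$, that one can compute an $r$-rooted tree of cost $O(h\,|R|^{1/h})\cdot\opt$ reaching *all* terminals with every length bound violated by at most a factor $(1+\varepsilon')^{h}$ — and then reset $\varepsilon'$ so that $(1+\varepsilon')^{h} \le 1+\varepsilon$. In the inductive step, one uses the LP-based primitive at the root to find an intermediate node $u$ (or a small set of them) together with a cheap $r$-$u$ path $P$ of length $\le \varepsilon' d(v)$, such that a $|R|^{1/h}$-fraction of the still-uncovered terminals can be served through $u$ with remaining budget $d(v)-\ell(P)$; recurse with height $h-1$ on the residual instance rooted at $u$, with the reduced distance bounds, and repeat on the leftover terminals. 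Summing a geometric-type series over the $\log_{|R|^{1/h}}|R| = h$ levels of recursion yields the claimed $O(h\,|R|^{1/h})$ cost overhead, and choosing $h = \Theta(1/\varepsilon)$ makes this $O(|R|^{O(\varepsilon)})$ while the length blowup telescopes to $1+\varepsilon$. (The constant hidden in the big-$O$ of the ratio can be absorbed by mildly decreasing $\varepsilon$ in the exponent, matching the clean statement $|R|^{\varepsilon}$.)

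The main obstacle is the interaction between the *cost* recursion and the *length* budget: in the undirected uniform case of Kortsarz--Peleg one can afford to split the length bound evenly across the $h$ levels because all bounds are equal and the shortest-path metric is symmetric, whereas here the bounds $d(v)$ are non-uniform and directed distances need not obey any triangle-type regularity. The fix I expect to need is to charge length multiplicatively rather than additively — each recursion level is allowed to consume a $(1+\varepsilon')$ *factor* of every surviving terminal's residual budget for the connecting path $P$, which is why we work with the length-expanded graph and with the subgraphs $G_v$ rather than with raw distances. A secondary technical point is ensuring the LP at each recursion level remains feasible on the residual instance with the *scaled-down* bounds; this follows because the restriction of an integral optimum to the terminals still uncovered, together with the already-built prefix, certifies a feasible fractional solution of cost $\le \opt$ in the residual LP. Finally, one must verify the running time: the number of recursion nodes is $|R|^{O(1)}$ since each level multiplies the number of subproblems by at most a polynomial factor and there are only $h = O(1/\varepsilon)$ levels, and each subproblem solves one polynomial-size LP (or invokes a polynomial-time combinatorial primitive), giving a total running time polynomial in $n$ for every fixed $\varepsilon>0$.
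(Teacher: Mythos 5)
Your high-level outline---height reduction to $h=O(1/\varepsilon)$ levels and a density-driven greedy that repeatedly covers a batch of terminals cheaply and recurses on the residual instance---matches the paper's overall shape. However, the central primitive you propose is a genuine gap: you suggest rounding a distance-constrained cut/flow LP once to obtain a partial tree of cost $O(B)$ that covers a constant or $|R|^{1/h}$ fraction of the terminals. No such LP-rounding step is known for directed Steiner-tree-type problems; the natural cut LP has polynomial integrality gap, and the Zelikovsky/Charikar et al.\ ``density'' arguments you invoke are precisely \emph{not} LP-rounding arguments---they are combinatorial recursive-greedy analyses. The paper does not use any LP in this section. Instead it runs a purely combinatorial recursive greedy (Algorithm~\ref{alg:algorithm}): at each level it enumerates candidate intermediate nodes $v$ and target counts $k'$, computes $r$--$v$ paths for geometrically spaced length budgets $(1+\varepsilon)^j$ via the FPTAS for restricted shortest paths \cite{hassin92:fptas-restr-shortest-paths,ergun-etal:fptaspath}, recurses at level $i-1$ with reduced distance bounds, and keeps the alternative of minimum relative cost. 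This is then shown to be an $f(k)$-\emph{partial approximation}, converted to a full approximation by the integral lemma (Lemma~\ref{partialtofull}) and compared against unrestricted optima via the level-reduction bound of Helvig et al.\ (Lemma~\ref{withoutlevels}). Your proposal has no working substitute for this chain.

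A secondary, more minor difference is the length accounting. You let the length error compound to $(1+\varepsilon')^h$ and then reset $\varepsilon'\approx\varepsilon/h$; this can be made to work but is coarse. The paper avoids compounding altogether: each level reduces the residual budgets as $d'(u):=d(u)-\ell(P_j)/(1+\varepsilon)$ while the realized path length satisfies $\ell(P_j)\le(1+\varepsilon)^{j+1}$, so that the violation telescopes to a single $(1+\varepsilon)$ factor across all levels (Lemma~\ref{partialapprox}). Finally, your cost recursion is too loose---the greedy at a given level may add $\Theta(k)$ subtrees, not $h$, so ``summing a geometric series over $h$ levels'' does not describe the actual recursion tree; the precise inductive claim needed (each $T_{\textnormal{best}}$ has relative cost at most $(i-1)$ times the optimum $i$-level tree on the residual instance) and its proof are the load-bearing steps your proposal would still have to supply.
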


Finally (Section~\ref{sec:lwds}), we give
a further application of our shallow-light Steiner tree result: 
\begin{definition}[Light-Weight Directed $\alpha$-Spanners]\label{def:alphaspan}
Given a directed graph $G=(V,E)$, an edge cost function
$c\colon E\to\mathbb{N}$, an edge length function $\ell\colon
E\to\mathbb{N}$, and a stretch factor $\alpha\geq 1$. 
We ask for a 
spanning subgraph $H$ of $G$ of minimum cost (with respect to~$c$) 
such that for each node pair $u,v$ the distance
$\bar\ell_H(u,v)$ in $H$ (with respect to~$\ell$) is at most $\alpha\cdot\bar\ell_G(u,v)$, i.e., $\alpha$
times their distance in $G$.
\end{definition}
As of now, this problem has only been successfully tackled for undirected graphs~\cite{peleg-schaeffer:graph-spanners,althoefer-etal:spanners}. Its directed variant remained
an interesting open problem~\cite{dinitz11:directed-spanners}\footnote{As mentioned in the corresponding slides, available online.}.
We give the first non-trivial result:
\begin{theorem}
There is a bi-criteria $(\alpha+\varepsilon,O(n^{1/2+\varepsilon}))$-approximation for light-weight directed $\alpha$-spanners.
\end{theorem}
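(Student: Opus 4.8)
The plan is to recognize light-weight directed $\alpha$-spanners as the directed network design problem with bounded distances in which the single bound $L$ is replaced by the pair-dependent bound $L(u,v):=\alpha\cdot\bar\ell_G(u,v)$, and to run the two-stage algorithm behind our $(2+\varepsilon,O(n^{1/2+\varepsilon}))$-approximation for that problem with these non-uniform bounds. The point to verify first is that every ingredient of that algorithm — the thin/thick partition of the node pairs, the node sampling for thick pairs, and the cut-covering LP for thin pairs — uses the targets $L(\cdot,\cdot)$ only as opaque numbers, so replacing one global $L$ by many $L(u,v)$ changes nothing structurally. Throughout, fix an optimal $\alpha$-spanner $H^{\ast}$; since $H^{\ast}$ is an $\alpha$-spanner it contains, for every ordered pair $(a,b)$ with $\bar\ell_G(a,b)<\infty$, an $a$-$b$ path of length at most $\alpha\,\bar\ell_G(a,b)$, so $H^{\ast}$ reaches from/to any node exactly the nodes that $G$ does.

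For the thick pairs I would sample a set $S$ of $\Theta(\sqrt n\log n)$ vertices, ensuring w.h.p.\ that every thick pair $(u,v)$ has some $s\in S$ on its designated shortest path in $G$. For each $s\in S$ I construct an out-tree and an in-tree by applying Theorem~\ref{thm:slst} twice: the out-tree instance has root $s$, terminal set $R=\{v\in V:\bar\ell_G(s,v)<\infty\}$, and distance bounds $d(v):=\alpha\cdot\bar\ell_G(s,v)$; the in-tree instance is the same in the reverse graph with $d(u):=\alpha\cdot\bar\ell_G(u,s)$. Each instance is feasible because a shortest-path tree of $G$ rooted at $s$ meets the bounds (even with $\alpha=1$); crucially, the shortest-path tree of $H^{\ast}$ rooted at $s$ is a feasible solution of cost at most $c(H^{\ast})$, whence the optimum of each instance is at most $c(H^{\ast})$. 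By Theorem~\ref{thm:slst} we obtain trees of cost $O(|R|^{\varepsilon})c(H^{\ast})=O(n^{\varepsilon})c(H^{\ast})$ that respect the bounds up to a factor $1+\varepsilon$; adding all $2|S|$ of them to the solution costs $O(n^{1/2+\varepsilon})c(H^{\ast})$ (absorbing the $\log n$ and the sampling constant into $\varepsilon$). For a thick pair $(u,v)$ with $s\in S$ on its canonical path we have $\bar\ell_G(u,s)+\bar\ell_G(s,v)=\bar\ell_G(u,v)$, so concatenating the in- and out-tree at $s$ gives a $u$-$v$ walk of length at most $(1+\varepsilon)\alpha\bigl(\bar\ell_G(u,s)+\bar\ell_G(s,v)\bigr)=(1+\varepsilon)\alpha\,\bar\ell_G(u,v)$, which is at most $(\alpha+\varepsilon)\,\bar\ell_G(u,v)$ after a harmless rescaling of $\varepsilon$ (valid since $\alpha$ is fixed).

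For the thin pairs I would invoke the LP-rounding stage of the framework unchanged: a covering LP with a variable $x_e\in[0,1]$ per edge that forces, for each thin pair $(u,v)$, a fractional path through edges $e=(a,b)$ that are usable for this pair, i.e.\ satisfy $\bar\ell_G(u,a)+\ell(e)+\bar\ell_G(b,v)\le\alpha\,\bar\ell_G(u,v)$; the separation problem is a shortest-path computation. Since $H^{\ast}$ contains an integral usable $u$-$v$ path for every thin pair, the LP value is at most $c(H^{\ast})$, and the thin-pair rounding of Section~\ref{sec:network-design-bounded} — whose loss is $O(\sqrt n\log n)$ because a thin pair's obstacles involve only $O(\sqrt n)$ edges — produces a subgraph covering all thin pairs w.h.p.\ at cost $O(\sqrt n\log n)\,c(H^{\ast})$. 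Combining the thick and the thin parts yields a spanning subgraph of cost $O(n^{1/2+\varepsilon})c(H^{\ast})$ that is an $(\alpha+\varepsilon)$-spanner, and every step is polynomial for fixed $\varepsilon$.

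The heavy lifting is already done in Theorem~\ref{thm:slst} and in the thin-pair rounding analysis, so the only genuine obstacle here is a double bookkeeping task. First, one must double-check that none of the probabilistic or linear-programming arguments of the bounded-distance algorithm secretly exploit the uniformity of $L$. Second — the slightly subtle point — one must set the Steiner-tree distance bounds to $\alpha\,\bar\ell_G(\cdot)$ rather than $\bar\ell_G(\cdot)$, so that the unavoidable $(1+\varepsilon)$-violation of Theorem~\ref{thm:slst} materializes as a multiplicative $(1+\varepsilon)$ on top of the stretch $\alpha$ (collapsing to $\alpha+\varepsilon$), rather than on top of the smaller quantity $\bar\ell_G(u,v)$, which would not suffice; correspondingly one must verify the cheap-tree-in-$H^{\ast}$ argument still gives the optimum bound $c(H^{\ast})$ for these relaxed targets. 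The $O(n^{1/2+\varepsilon})$ cost bound then comes out of exactly the same $\sqrt n$-versus-$\sqrt n$ balancing between the sampling rate and the LP-rounding loss as in the uniform case.
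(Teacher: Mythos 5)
Your overall architecture matches the paper's: the thin/thick split, the cut-covering LP for thin pairs, and node-sampling plus shallow-light in-/out-trees with distance bounds $\alpha\cdot\bar\ell_G(\cdot)$ for thick pairs, with the cost bound coming from the fact that the shortest-path tree inside an optimal spanner $H^*$ is a feasible solution of cost at most $\opt$. The gap lies in the stretch bound for thick pairs, and it is a genuine one.

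You assert that the sampled hub $s$ lies on a \emph{shortest} $u$-$v$ path, so that $\bar\ell_G(u,s)+\bar\ell_G(s,v)=\bar\ell_G(u,v)$, and this equality is exactly what lets you conclude a stretch of $(1+\varepsilon)\alpha$. But the sampling argument does not give you that. ``Thick'' is defined by $|V_{uv}|>\sqrt{n}$ where $V_{uv}$ is the set of nodes lying on \emph{any} $u$-$v$ path of length at most $L(u,v)=\alpha\bar\ell_G(u,v)$ --- not just on shortest paths. With $\Theta(\sqrt n\log n)$ samples you are only guaranteed (w.h.p.) some $s\in S\cap V_{uv}$, and such an $s$ satisfies only $\bar\ell_G(u,s)+\bar\ell_G(s,v)\le\alpha\,\bar\ell_G(u,v)$, with equality potentially far from holding. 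Plugging this into your own computation gives a walk of length at most $(1+\varepsilon)\alpha\bigl(\bar\ell_G(u,s)+\bar\ell_G(s,v)\bigr)\le(1+\varepsilon)\alpha^2\,\bar\ell_G(u,v)$, i.e.\ stretch $(\alpha+\varepsilon)\alpha$ after rescaling, not $\alpha+\varepsilon$. This is in fact exactly the bound that appears on the right-hand side of the paper's own displayed inequality in Section~\ref{sec:lwds}, so the paper's sketch and your (corrected) argument agree on $(\alpha+\varepsilon)\alpha$; the theorem statement's $\alpha+\varepsilon$ does not follow from either.

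You cannot repair this by redefining thick pairs via nodes on shortest paths only: the thin-pair argument (Lemma~\ref{lem:number-stretching-cuts}) bounds the number of stretching cuts by $|V_{uv}|^{|V_{uv}|}$ where $V_{uv}$ must comprise the nodes on \emph{all} $\alpha$-stretched $u$-$v$ paths --- those are exactly the paths the LP may fractionally route over and that $H^*$ may integrally use. A pair can have a two-node shortest path yet a huge $V_{uv}$, so switching the classification to shortest-path support would make the union bound in Lemma~\ref{lem:thinsettled} fail. You also cannot tighten the shallow-light distance bounds to $\bar\ell_G(s,\cdot)$, since then the shortest-path tree of $H^*$ is no longer a witness of a cheap feasible solution, which destroys the $O(n^\varepsilon)\opt$ cost bound. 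So the subtle point you flag in your last paragraph is indeed the crux, but the resolution you propose (hubs on shortest paths) is incompatible with the thin/thick machinery; the argument as it stands yields stretch $(\alpha+\varepsilon)\alpha$.
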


\section{Network Design with Bounded Pairwise Distance}\label{sec:network-design-bounded}

We build our solution network as the union of subgraphs. We say such a subgraph \emph{settles} a node pair $(u,v)$, if it includes a path connecting $u$ to $v$ complying with the
distance bound.
As sketched above, the overall scheme of our approximation algorithm is to classify node pairs into two categories.
Let $(u,v)\in V\times V$ be any node pair, and $\mathcal{P}^L_{uv}$ the set of all
$u$-$v$ paths of length at most~$L$. We denote with
$V_{uv}:=\bigcup_{P\in\mathcal{P}^L_{uv}}V(P)$ and $E_{uv}:=\bigcup_{P\in\mathcal{P}^L_{uv}}E(P)$ 
the nodes and edges, respectively, contained in any such path. 
The node pair $(u,v)$ is called \emph{thin} if $|V_{uv}|\leq \sqrt{n}$ and
\emph{thick} otherwise.
We settle node pairs based on this classification. However, we will never explicitly compute any $\mathcal{P}^L_{uv}$, $V_{uv}$, $E_{uv}$ nor any node-pair classifications.
They are only of interest for the approximation proof.
We note that the concept of this classification is lifted from Feldman et al.\ \cite{feldman-etal12:dsf}.
The handling of the thin pairs follows the idea of anti-spanners by Berman et al.\ \cite{berman-etal:apx-spanners-dsf}, as it can be made to work in our context, see below. Successfully tackling the thick pairs, however, is a technical challenge and requires our result on shallow-light trees (see Section~\ref{sec:direct-shall-light}).
Let $\opt$ denote the value of the optimum solution to the full problem.

\subsection{Thin Pairs}

\paragraph{Path-Based LP.}
We consider the following path-based
LP relaxation of the problem, requiring an exponential number of variables. Let $\mathcal{P}^L:=\bigcup_{(u,v)\in V\times V} \mathcal{P}^L_{uv}$.

\begin{align}
\begin{aligned}
  \min \sum\nolimits_{e\in E} c_ex_e, & \quad \textnormal{s.t.} \\  
  \sum\nolimits_{P\in\mathcal{P}^L_{u,v}} f_P & \geq 1  && \forall (u,v)\in V\times V\\
  \sum\nolimits_{P\in\mathcal{P}^L_{u,v},P\ni e}f_P & \leq x_e && \forall e\in E, (u,v)\in V\times V\\
  x_e \geq 0,\quad f_P & \geq 0 && \forall e\in E,\quad \forall P\in\mathcal{P}^L
\end{aligned}\label{primalLP}\\
\intertext{Its dual can be written as:}
\begin{aligned}
  \max \sum\nolimits_{(u,v)\in V\times V}\alpha_{uv}, & \quad \textnormal{s.t.} \\
  \sum\nolimits_{e\in P}\beta^e_{uv} & \geq \alpha_{uv} && \forall(u,v)\in V\times V, P\in\mathcal{P}^L_{uv}\\
  \sum_{(u,v)\in V\times V}\beta^e_{uv} & \leq c_e && \forall e\in E\\
  \beta^e_{uv}\geq 0,\quad \alpha_{uv} & \geq 0 && \forall e\in E, (u,v)\in V\times V
\end{aligned}\label{dualLP}
\end{align}

LP~(\ref{primalLP}) has an exponential number of variables. Below, we
argue that we can get a PTAS for this LP by an approach analogous to
the one proposed in \cite{dinitz11:directed-spanners}. Let
$\varepsilon>0$.  We first consider the dual LP~(\ref{dualLP}).  This
LP has a polynomial number of variables but an exponential number of
constraints. We use the ellipsoid method to get an approximate
solution to it.  The separation oracle works as follows. (We do not
consider the constraints $\sum_{(u,v)\in V\times V}\beta^e_{uv} \leq
c_e$ since there are only polynomially many of these.)  For each fixed
$(u,v)\in V\times V$, we consider variables the $\beta^e_{uv}$ as edge
weights.  Thus, determining whether a constraint is violated for some
$P\in P^L_{uv}$ amounts to checking whether $\alpha_{uv}$ is at most
the weight of a lightest $u$--$v$ path (under weights $\beta^e_{uv}$)
whose length (under edge lengths $\ell$) is bounded by $L$.  Already
this necessary subproblem (\emph{length-bounded shortest path}) is
NP-hard.  However, Hassin~\cite{hassin92:fptas-restr-shortest-paths},
later sped up by Ergun et al.~\cite{ergun-etal:fptaspath}, describes
an FPTAS.  Assume we run the ellipsoid algorithm by using this
approximate separation oracle with error parameter $\varepsilon$.
Then, we end up with an optimum solution to the \emph{restricted} dual
LP, which has only constraints for paths $P\in\mathcal{P}^L$ that we
included when running the ellipsoid algorithm. Since we used an FPTAS
for the separation oracle, the constraints that we did not include can
be violated by a factor at most $1-\varepsilon$. That is, we have
$\sum\nolimits_{e\in P}\beta^e_{uv} \geq (1-\varepsilon)\alpha_{uv}$
for all paths $P\in\mathcal{P}^L$ that we did not include.  Hence, if
we set $\alpha_{uv}'=(1-\varepsilon)\alpha_{uv}$ we obtain a feasible
solution to the original dual LP that is $(1-\varepsilon)$-approximate
with respect to the optimum solution of the restricted dual.  Now
suppose that we solve the \emph{restricted} primal LP where we only
include the (polynomially many) variables that correspond to
constraints of the restricted dual.  Then the optimum solution to this
LP is at most $1/(1-\varepsilon)$ times larger than the optimum
solution to the original dual (and hence the original primal) since
the restricted dual LP is the dual to the restricted primal LP and
since the original dual is $(1-\epsilon)$-approximate to the
restricted dual.

\newcommand{\Hthin}{H_1}
\paragraph{Randomized LP Rounding.}
We describe an algorithm that computes a subgraph $\Hthin\subseteq G$ where
the distance $\bar \ell_{\Hthin}(u,v)$ is at most $L$ for every thin pair $(u,v)$.
The algorithm first solves the
above LP within a ratio of $1+\varepsilon$.  Then each edge $e$ is sampled with probability
$\min(\gamma\cdot x_e,1)$ where $\gamma:=\sqrt{n}\cdot\log n$.  The cost of $\Hthin$ is $O(\gamma(1+\varepsilon)\opt)$.  We have to show
that this algorithm creates a feasible solution with high probability.

\begin{definition}
  Let $(u,v)$ be a thin pair, $C\subseteq E$ a set of edges, and $G_C:=(V,E\setminus C)$.
  We say $C$ is a \emph{$u$-$v$-stretching cut} if $\bar\ell_{G_{C'}}(u,v)\leq L$ for all $C'\subset C$ but $\bar\ell_{G_C}(u,v)>L$.
\end{definition}

\begin{lemma}\label{lem:settling-by-cuts}
  Let $H=(V,E')$ be a subgraph of $G$ and $(u,v)$ a thin pair.  $H$
  settles $(u,v)$ if and only if each $u$-$v$-stretching cut contains
  at least one edge of $E'$.
\end{lemma}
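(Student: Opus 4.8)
The statement is an "if and only if" characterizing exactly when a subgraph $H=(V,E')$ settles a thin pair $(u,v)$, phrased in terms of $u$-$v$-stretching cuts (minimal edge sets whose removal pushes the $u$-$v$ distance above $L$). The plan is to prove both directions by contraposition, using the minimality built into the definition of a stretching cut.

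**Forward direction ($\Rightarrow$, via contrapositive).** Suppose some $u$-$v$-stretching cut $C$ contains no edge of $E'$; I want to conclude $H$ does not settle $(u,v)$. Since $E'\cap C=\emptyset$, we have $E'\subseteq E\setminus C$, so $H$ is a subgraph of $G_C=(V,E\setminus C)$. By the definition of a stretching cut, $\bar\ell_{G_C}(u,v)>L$, and distances only increase when passing to a subgraph, so $\bar\ell_H(u,v)\ge\bar\ell_{G_C}(u,v)>L$. Hence $H$ contains no $u$-$v$ path of length at most $L$, i.e., it does not settle $(u,v)$.

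**Reverse direction ($\Leftarrow$, via contrapositive).** Suppose $H$ does not settle $(u,v)$, so $\bar\ell_H(u,v)>L$. I want to exhibit a $u$-$v$-stretching cut $C$ with $C\cap E'=\emptyset$. Start from the candidate set $C_0:=E\setminus E'$: removing $C_0$ from $G$ leaves exactly $H$, so $\bar\ell_{G_{C_0}}(u,v)=\bar\ell_H(u,v)>L$. Now shrink $C_0$ to a minimal subset $C\subseteq C_0$ that still satisfies $\bar\ell_{G_C}(u,v)>L$ (such a minimal set exists by finiteness; just remove edges from $C_0$ one at a time as long as the distance stays above $L$). By construction $\bar\ell_{G_C}(u,v)>L$, and by minimality $\bar\ell_{G_{C'}}(u,v)\le L$ for every proper subset $C'\subsetneq C$ — which is precisely the definition of a $u$-$v$-stretching cut. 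Finally $C\subseteq C_0=E\setminus E'$ gives $C\cap E'=\emptyset$, so this stretching cut contains no edge of $E'$, completing the contrapositive.

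**Main obstacle.** There is no real obstacle here — the lemma is essentially an unwinding of definitions, and the only subtle point is making sure the minimality clause in the definition of a stretching cut matches up correctly with the "shrink to a minimal set" construction in the reverse direction (in particular that $C'\subsetneq C$ quantifies over all proper subsets, not just those obtained by deleting a single edge, which holds because if some proper subset kept the distance above $L$ then some single-edge deletion from $C$ would too, by monotonicity of distance under edge removal). One should also note that the "thin" hypothesis on $(u,v)$ plays no role in this particular lemma; it is stated for thin pairs only because that is the regime in which the lemma will subsequently be applied. Stating $\bar\ell_H(u,v)\le L$ as the meaning of "settles" and invoking monotonicity of shortest-path distance under subgraph inclusion are the only facts used.
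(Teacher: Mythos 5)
Your proof is correct and takes essentially the same approach as the paper's: the forward direction is identical (monotonicity of distances under subgraph inclusion), and for the reverse direction the paper simply asserts that $E\setminus E'$ ``would contain'' a $u$-$v$-stretching cut, which is exactly the greedy minimization you spell out. The extra care you take in matching the minimality condition of the definition to your construction, and the observation that the ``thin'' hypothesis is unused here, are both accurate and go slightly beyond the paper's terse justification.
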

\begin{proof}
  If there is a $u$-$v$-stretching cut $C$ that contains no edge of
  $E'$ then $E'\subseteq E\setminus C$ and hence
  $\bar\ell_{H}(u,v)\geq\ell_{G_C}>L$.  Conversely, if $H$ does not
  settle $(u,v)$ then $\bar\ell_{H}(u,v)>L$ and hence $E\setminus E'$
  would contain a $u$-$v$-stretching cut $C$, which clearly has no
  edge of $E'$.
\end{proof}

\begin{lemma}\label{lem:number-stretching-cuts}
  For each thin pair $(u,v)$ the number of $u$-$v$-stretching cuts is
  at most $\sqrt{n}^{\sqrt{n}}$.
\end{lemma}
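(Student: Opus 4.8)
The plan is to encode each $u$-$v$-stretching cut $C$ injectively by a rooted arborescence living on the small vertex set $V_{uv}$, and then to count such arborescences. First I would localize: every edge $e\in C$ lies on a short $u$-$v$ path. Indeed, since $C$ is a stretching cut, $\bar\ell_{G_{C\setminus\{e\}}}(u,v)\le L$, so there is a (simple) path $P\in\mathcal{P}^L_{uv}$ avoiding $C\setminus\{e\}$; and $P$ must use $e$, for otherwise $P$ avoids all of $C$ and contradicts $\bar\ell_{G_C}(u,v)>L$. Hence $e\in E(P)\subseteq E_{uv}$, so $C\subseteq E_{uv}$. For a stretching cut $C$ I then work with the auxiliary graph $H_C:=(V_{uv},E_{uv}\setminus C)$, which has at most $\sqrt{n}$ vertices (as $(u,v)$ is thin), still contains both $u$ and $v$, and satisfies $\bar\ell_{H_C}(u,v)>L$ (a short $u$-$v$ path in $H_C$ would also be one in $G_C$). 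The aim is to show that $C$ is completely determined by the single vector $\big(\bar\ell_{H_C}(u,w)\big)_{w\in V_{uv}}$ of distances from $u$ inside $H_C$.

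The crux is the equivalence: for every $(a,b)\in E_{uv}$ one has $(a,b)\in C$ if and only if $\bar\ell_{H_C}(u,b)>\bar\ell_{H_C}(u,a)+\ell(a,b)$. If $(a,b)\notin C$, then $(a,b)$ is an edge of $H_C$, so the triangle inequality gives the ``$\le$'' side. Conversely, if $(a,b)\in C$, minimality of $C$ provides a short $u$-$v$ path $P$ through $(a,b)$ that avoids the rest of $C$; being short, $P$ lies inside $V_{uv}$, so its subpath from $u$ to $a$ and its subpath from $b$ to $v$ are paths of $H_C$, whence $\bar\ell_{H_C}(u,a)+\ell(a,b)+\bar\ell_{H_C}(b,v)\le\ell(P)\le L$. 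Combining this with $\bar\ell_{H_C}(u,b)+\bar\ell_{H_C}(b,v)\ge\bar\ell_{H_C}(u,v)>L$ and using that $\bar\ell_{H_C}(b,v)$ is finite (it is at most $L$), I cancel that common term to get $\bar\ell_{H_C}(u,b)>\bar\ell_{H_C}(u,a)+\ell(a,b)$. Thus the distance vector decides the membership of every edge of $E_{uv}$, and so pins down $C$.

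It remains to count the possible distance vectors. For each stretching cut $C$, fix an arbitrary shortest-path arborescence rooted at $u$ inside $H_C$; since the lengths $\ell$ are fixed, this arborescence reproduces the vector $\big(\bar\ell_{H_C}(u,w)\big)_w$, hence --- by the previous paragraph --- the cut $C$ itself, so the assignment of arborescences to cuts is injective. An arborescence rooted at $u$ on a subset of $V_{uv}$ is described by choosing, for each of the at most $\sqrt{n}$ vertices, one of at most $\sqrt{n}$ targets: its parent in the arborescence, or a null pointer (used for $u$ and for the vertices not reached from $u$). There are at most $\sqrt{n}^{\sqrt{n}}$ such choices, which bounds the number of stretching cuts. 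The step I expect to be the genuine obstacle is realizing that the out-distances from $u$ alone suffice: the more obvious criterion $(a,b)\in C\iff\bar\ell_{H_C}(u,a)+\ell(a,b)+\bar\ell_{H_C}(b,v)\le L$ uses both the distances from $u$ and the distances to $v$, and encoding both vectors would square the count to about $n^{\sqrt{n}}$; the cancellation above --- which only uses $\bar\ell_{H_C}(u,v)>L$ together with the triangle inequality --- is exactly what eliminates the ``distance to $v$'' term and lowers the bound to $\sqrt{n}^{\sqrt{n}}$.
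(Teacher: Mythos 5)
Your proof is correct and follows essentially the same route as the paper: encode each stretching cut by the vector of shortest-path distances from $u$ inside $H_C=(V_{uv},E_{uv}\setminus C)$ (equivalently, by a shortest-path arborescence rooted at $u$), show membership of an edge $(a,b)$ in $C$ is decided by whether $\bar\ell_{H_C}(u,b)>\bar\ell_{H_C}(u,a)+\ell(a,b)$ via the same minimality and cancellation argument, and count arborescences by a parent-pointer encoding on at most $\sqrt n$ vertices. Your version is slightly more careful than the paper's in making explicit that $C\subseteq E_{uv}$, that $\bar\ell_{H_C}(b,v)$ is finite before cancelling, and that unreachable vertices get a null pointer.
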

\begin{proof}
  Consider some $u$-$v$-stretching cut $C$ and let $T$ be a shortest
  path tree in the graph $H_C:=(V_{uv},E_{uv}\setminus C)$ rooted at $u$. Let
  $\bar\ell_T(w)$ denote the distance from $u$ to $w$ in~$T$.  If there is no
  $u$-$w$ path in $H_C$ then $\bar\ell_T(w):=\infty$.  We show that $C=\{wx\in
  E_{uv}\mid \bar\ell_T(w)+\ell(wx)<\bar\ell_T(x)\}$, which implies that $C$ is
  uniquely determined by $T$.

  Consider an edge $wx\in E_{uv}$ such that
  $\bar\ell_T(w)+\ell(wx)<\bar\ell_T(x)$.  Then $wx\in C$ because $T$ is a
  shortest path tree in $H_C$.

  Now, let $wx\in C$.  Because $C':=C\setminus\{wx\}$ is not a $u$-$v$ stretching
  cut there is a $u$-$v$ path in $H_{C'}:=(V_{uv},E_{uv}\setminus C')$ of length at
  most~$L$.  This path must use the edge $wx$ and has length
  $\bar\ell_T(w)+\ell(wx)+\bar\ell_{H_C}(x,v)$.  Since $H_C$ has no $u$-$v$
  path of length at most $L$ we can conclude that
  $\bar\ell_{H_C}(u,x)+\bar\ell_{H_C}(x,v)>L$ and therefore
  $\bar\ell_T(w)+\ell(wx)<\bar\ell_{H_C}(u,x)=\bar\ell_T(x)$.

  Hence the $u$-$v$-stretching cut $C$ is uniquely determined
  by the tree $T$.  We now count the number of rooted trees in $H_C$.
  For every node in such an out-tree there are $\sqrt{n}$ possibilities
  to choose its parent node.  Hence the total number of rooted trees
  and therefore the number of $u$-$v$-stretching cuts can be upper
  bounded by $\sqrt{n}^{\sqrt{n}}$.
\end{proof}

\begin{lemma}\label{lem:thinsettled}
  The above algorithm settles each thin pair with high probability.
\end{lemma}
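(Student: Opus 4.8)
The plan is to invoke Lemma~\ref{lem:settling-by-cuts} to reduce the statement to showing that, with high probability, every $u$-$v$-stretching cut of every thin pair $(u,v)$ contains at least one edge sampled into $\Hthin$. Throughout we work with the fractional solution $x$ (together with the corresponding flows $f_P$) of the restricted primal LP that the algorithm actually solves; recall that this solution, in particular, satisfies the flow constraints $\sum_{P\in\mathcal{P}^L_{uv}}f_P\ge 1$ and the capacity constraints $\sum_{P\ni e}f_P\le x_e$.

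The first step is to show that $x$ puts fractional mass at least~$1$ on every stretching cut, i.e.\ $\sum_{e\in C}x_e\ge 1$ for every $u$-$v$-stretching cut $C$. The point is that each path $P\in\mathcal{P}^L_{uv}$ must use an edge of $C$: otherwise $P\subseteq E\setminus C$ would be a $u$-$v$ path of length at most~$L$ in $G_C$, contradicting $\bar\ell_{G_C}(u,v)>L$. Summing the capacity constraints over $e\in C$ and using the flow constraint of pair $(u,v)$ then gives $\sum_{e\in C}x_e\ge\sum_{P\in\mathcal{P}^L_{uv}}f_P\cdot|\{e\in C: e\in P\}|\ge\sum_{P\in\mathcal{P}^L_{uv}}f_P\ge 1$. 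This argument only refers to the flow constraint of $(u,v)$ and to capacity constraints, so it applies verbatim to the restricted LP, since the included $u$-$v$ paths of length at most~$L$ still all cross $C$.

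The second step is a probabilistic tail bound. Since $1-\min(\gamma t,1)\le e^{-\gamma t}$ for $t\ge 0$, for a fixed stretching cut $C$ we get $\Pr[\Hthin\cap C=\emptyset]=\prod_{e\in C}\bigl(1-\min(\gamma x_e,1)\bigr)\le e^{-\gamma\sum_{e\in C}x_e}\le e^{-\gamma}=n^{-\sqrt n}$, using $\gamma=\sqrt n\log n$. Finally, I would take a union bound: by Lemma~\ref{lem:number-stretching-cuts} each of the at most $n^2$ thin pairs admits at most $\sqrt{n}^{\sqrt n}=n^{\sqrt n/2}$ stretching cuts, so the probability that some stretching cut of some thin pair is missed is at most $n^2\cdot n^{\sqrt n/2}\cdot n^{-\sqrt n}=n^{2-\sqrt n/2}=o(1)$. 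Combined with Lemma~\ref{lem:settling-by-cuts}, this shows that $\Hthin$ settles every thin pair with high probability.

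The main obstacle is the LP-mass bound on stretching cuts in the first step; once it is in hand, the rest is a routine Chernoff-type estimate plus a union bound that is made to work precisely because the count $\sqrt{n}^{\sqrt n}$ of stretching cuts per thin pair from Lemma~\ref{lem:number-stretching-cuts} is subexponential enough to be dominated by the $n^{-\sqrt n}$ per-cut failure probability. A minor subtlety worth spelling out is that the rounding is applied to a solution of the restricted LP rather than the original exponential-size LP, but, as argued above, the covering inequality $\sum_{e\in C}x_e\ge 1$ survives this restriction.
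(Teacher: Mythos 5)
Your proof is correct and follows essentially the same route as the paper: reduce via Lemma~\ref{lem:settling-by-cuts} to covering all stretching cuts, show $\sum_{e\in C}x_e\geq 1$ from the flow and capacity constraints, apply the exponential bound $\prod_{e\in C}(1-\min(\gamma x_e,1))\leq e^{-\gamma}=n^{-\sqrt n}$, and union-bound against the $n^2\sqrt n^{\sqrt n}$ cuts counted by Lemma~\ref{lem:number-stretching-cuts}. The only differences are cosmetic (you fold the paper's case distinction $\gamma x_e\geq 1$ into the single inequality $1-\min(\gamma t,1)\leq e^{-\gamma t}$) plus your explicit remark that the covering inequality survives restriction to the truncated primal LP, which the paper leaves implicit.
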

\begin{proof}
  By Lemma~\ref{lem:settling-by-cuts}, is suffices to show that for
  every thin pair $(u,v)$ and every $u$-$v$ stretching cut $C$ there
  is an edge from $\Hthin$ in $C$ with high probability.

  For every such cut $C$ the LP value $\sum_{e\in C}x_e$ must be at
  least~1.  This holds because every $u$-$v$ path in
  $\mathcal{P}_{u,v}^L$ must contain at least one edge of $C$, since
  the total flow sent along these paths is at least~1 and since
  $\sum_{e\in C}x_e$ is an upper bound on this total flow because of
  the contraints $\sum_{P\in\mathcal{P}^L_{u,v},P\ni e}f_P \leq x_e$
  in the LP.  If $\gamma\cdot x_e\geq 1$ for some $e\in C$ then $e\in
  E(\Hthin)$.  Otherwise, the probability that none of the edges in
  $C$ is sampled is at most
  \begin{displaymath}
    \prod\nolimits_{e\in C}(1-\gamma x_e)\leq \prod\nolimits_{e\in C}e^{-\gamma x_e}=e^{-\sqrt{n}\cdot \log n\sum_{e\in C} x_e}\leq n^{-\sqrt{n}}\,.
  \end{displaymath}

  By Lemma~\ref{lem:number-stretching-cuts}, the total number of
  stretching cuts is at most $n^2\sqrt{n}^{\sqrt{n}}$.  Hence the
  probability that at least one stretching cut contains no edge of $\Hthin$
  is at most $\sqrt{n}^{-\Omega(\sqrt{n})}$.
\end{proof}

\subsection{Thick Pairs and Overall Algorithm}

We now describe an algorithm to settle all thick pairs.  The algorithm
samples a set of $\delta=3\sqrt{n}\log n$ many nodes of $G$.  For
each node $u$ in this set, the algorithm determines a $u$-rooted shallow-light
Steiner tree $T_u$ by means of the algorithm described in
Section~\ref{sec:direct-shall-light} and summarized in
Theorem~\ref{thm:slst}.  As input for this algorithm we use the graph
$G$, the edge costs $c$ and the edge lengths $\ell$ as in the instance of the
network design problem;
 the root is the node~$u$ and the set $R$ of
terminals are all $V\setminus\{u\}$; we use $L$ as the distance bound for each node.
Similarly, the algorithm
computes an in-tree rooted at $u$ such that for each node the distance
to $u$ is at most $L$.  This can be accomplished by computing a shallow-light Steiner tree $T'$ in the graph $G'$ arising from $G$ by reversing all edges and then reversing the edges of $T'$.
The
output $H_2$ of the process is the union of all these spanning trees.

Our overall algorithm then returns $H_1\cup H_2$, the union of the solution for
the thin and the thick pairs, respectively.
We are now ready to prove the following theorem:
\setcounter{theorem}{0}
\begin{theorem}[Revisited]
  The above algorithm is a bi-criteria
  $(2+\varepsilon,O(n^{1/2+\varepsilon}))$-approximation algorithm for
  the directed network design problem with bounded distances (cf.\
  Definition~\ref{def:dndbd}). The running time depends on $n$ and
  $\varepsilon$ and is polynomial in $n$ for any fixed
  $\varepsilon>0$.
\end{theorem}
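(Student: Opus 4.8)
The plan is to assemble the result from the guarantees already established for the two stages, treating cost and feasibility in turn; we may assume the instance is feasible, since otherwise there is nothing to prove. For the cost, $H_1$ has cost $O(\gamma(1+\varepsilon)\opt)=O(\sqrt n\,\log n\cdot\opt)$ by its construction together with the fact that the path-based LP is a relaxation of the problem. For $H_2$ the key observation is structural: for every node $u$, the shallow-light Steiner tree instance with root $u$, terminal set $R=V\setminus\{u\}$, edge data $(c,\ell)$, and uniform distance bound $L$ has a feasible solution of cost at most $\opt$. Indeed, take a shortest-path tree with respect to $\ell$, rooted at $u$, inside the optimal network of the original problem; it is a subtree of that network, so its cost is at most $\opt$, and every node lies at $\ell$-distance at most $L$ from $u$ in it because it already does in the optimal network (this also certifies that the instance is feasible). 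Hence, running the algorithm of Theorem~\ref{thm:slst} with parameter $\varepsilon'$ produces an out-tree $T_u$ of cost at most $n^{\varepsilon'}\opt$ in which every node is within $\ell$-distance $(1+\varepsilon')L$ of $u$; the analogous bound holds for the in-tree rooted at $u$, obtained from an out-tree in the edge-reversed graph $G'$. Summing over the $\delta=3\sqrt n\log n$ sampled nodes, each contributing an in- and an out-tree, and adding $H_1$, the total cost is $O(\sqrt n\,\log n\cdot n^{\varepsilon'}\opt)$, which is $O(n^{1/2+\varepsilon}\opt)$ once $\varepsilon'$ is chosen small enough relative to $\varepsilon$ that the $\log n$ factor is absorbed into $n^{\varepsilon-\varepsilon'}$.

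For feasibility I would argue separately for thin and thick pairs. Thin pairs are settled within distance $L\le(2+\varepsilon)L$ by $H_1$ with high probability, by Lemma~\ref{lem:thinsettled}. For a thick pair $(u,v)$ we have $|V_{uv}|\ge\sqrt n$, so the probability that none of the $\delta$ sampled nodes lies in $V_{uv}$ is at most $(1-1/\sqrt n)^{3\sqrt n\log n}\le n^{-3}$; a union bound over the at most $n^2$ thick pairs shows that, with high probability, every thick pair $(u,v)$ has some sampled node $w\in V_{uv}$. Fixing such a $w$ together with an $L$-bounded $u$-$v$ path through it, the prefix of that path shows $\bar\ell_G(u,w)\le L$ and the suffix shows $\bar\ell_G(w,v)\le L$; consequently the in-tree rooted at $w$ contains a $u$-$w$ path of $\ell$-length at most $(1+\varepsilon')L$, and the out-tree $T_w$ contains a $w$-$v$ path of $\ell$-length at most $(1+\varepsilon')L$. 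Their concatenation is a $u$-$v$ path in $H_2$ of $\ell$-length at most $2(1+\varepsilon')L\le(2+\varepsilon)L$ for a suitable choice of $\varepsilon'$.

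Combining the two parts by a union bound over the boundedly many failure events, $H_1\cup H_2$ settles every node pair within distance $(2+\varepsilon)L$ with high probability, while its cost is $O(n^{1/2+\varepsilon}\opt)$; a standard Markov-type argument converts the expected cost bound into a high-probability one at the loss of only constant factors, and repetition boosts the overall success probability. The running time is dominated by solving the path-based LP via the ellipsoid method with the approximate length-bounded-shortest-path separation oracle described above, and by the $O(\sqrt n\log n)$ calls to the shallow-light Steiner tree algorithm of Theorem~\ref{thm:slst}; each of these is polynomial in $n$ for fixed $\varepsilon$, so the whole algorithm is. I expect the step requiring the most care to be the cost bound on $H_2$: one must exhibit the cheap feasible shallow-light Steiner tree hidden inside the optimal network in order to invoke Theorem~\ref{thm:slst}, and then verify that the two bi-criteria factors coming from the in- and out-trees compose to exactly $2+\varepsilon$ (rather than $2+2\varepsilon$) and that the $\log n$ overhead from sampling is swallowed by the $n^{\varepsilon}$ slack. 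The feasibility side is comparatively routine, given Lemma~\ref{lem:thinsettled} and the sampling argument lifted from Feldman et al.
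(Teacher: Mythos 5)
Your proof is correct and follows essentially the same two-part structure as the paper's: bound the cost of $H_1$ via LP rounding and of $H_2$ via the SLST guarantee applied to a feasible solution hidden in the optimal network, then establish feasibility via Lemma~\ref{lem:thinsettled} for thin pairs and the $(1-1/\sqrt n)^{\delta}\le n^{-3}$ sampling bound plus in-/out-tree concatenation for thick pairs. You actually spell out two details the paper leaves implicit---that a shortest-path tree inside the optimum network witnesses a feasible SLST instance of cost at most $\opt$, and the rescaling of the SLST parameter $\varepsilon'$ so that $2(1+\varepsilon')\le 2+\varepsilon$ and $\log n$ is absorbed by $n^{\varepsilon-\varepsilon'}$---which are small but genuine improvements in rigor over the paper's exposition, not a different approach.
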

\begin{proof}
  We first show that the algorithm outputs a feasible solution with
  high probability. In the light of Lemma~\ref{lem:thinsettled}, it remains to show that all thick pairs
  are settled with high probability.  A thick pair $(u,v)$
  is settled if the above algorithm samples a node $r$ from the set~$V_{uv}$.  
  In this case, the inclusion of the $r$-rooted in-tree and
  the $r$-rooted out-tree guarantees the existence of a $u$-$v$ path
  of length at most $2(1+\varepsilon)L$: we travel from $u$ to $r$ and then from
  $r$ to $v$.  Since for any thick pair its set $V_{uv}$ contains at
  least $\sqrt{n}$ many nodes, the probability that none of the $\delta$
  many sampled nodes are from $V_{uv}$ can be bounded by
  \begin{displaymath}
    \left(1-\frac{1}{\sqrt{n}}\right)^\delta\leq e^{-3\log n}=\frac{1}{n^3}\,.
  \end{displaymath}
  Since there are at most $n^2$ thick pairs the claim follows.

  We now analyze the cost of the algorithm.  The cost of the procedure
  for settling thin pairs is $\gamma(1+\varepsilon)\opt$
   since every edge is sampled
  with probability at most $\gamma$ times higher than its LP value.
  Now observe that every tree constructed in the procedure for thick
  pairs has cost at most $O(n^{\varepsilon})\opt$.  This follows from the
  fact that the optimum solution to the network design problem
  ensures the existence of a feasible solution to the problem of finding the rooted
  subtrees, and that the algorithm from
  Section~\ref{sec:direct-shall-light} is an
  $O(n^\varepsilon)$-approximation algorithm.  Since the number of such
  trees constructed by the algorithm is $O(\delta)$ the ratio of the algorithm is bounded by $O(\delta
  n^{\varepsilon}+\gamma)=O(n^{1/2+\varepsilon})$.
\end{proof}

\section{Directed Shallow-Light Steiner Trees}\label{sec:direct-shall-light}

Let $T$ be a rooted out-tree, i.e., its edges are directed from the root 
towards the leaves. A \emph{branch node} is a node with out-degree larger 
than~$1$; as a special case, we always consider the root node to be a branch 
node. We say $T$ is an $i$-level tree if no path from the root to any leaf 
contains more than $i$ branch nodes.

Let $T\subseteq G$ be any out-tree, subgraph of a complete digraph $G$, with an arbitrary number of levels. 
Clearly, we can find a related out-tree with the same root and leaves requiring at most $i$ levels, for any given $i$. If the edges
have metric weights, a very general result by Helvig et al.~\cite{helvig-etal:groupSteiner}
relates the weights of these two trees:
\begin{lemma}[Helvig et al.~\cite{helvig-etal:groupSteiner}]\label{withoutlevels}
 Let $T$ be a rooted subtree of weight $c(T)$ with $k$ leaves in a metrically-weighted complete digraph, and $T_i$ the cheapest subtree with the same root and leaves and at most $i$ levels.
 We have $c(T_i)\leq 2i (k/2)^{1/i} c(T)$.
\end{lemma}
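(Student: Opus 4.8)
The plan is to follow the standard ``height reduction'' argument for rooted Steiner trees and to prove, by induction on~$i$, the (formally slightly more general) statement that for any subtree $T$ of the complete metric digraph rooted at $r$ and any set $R$ of $k\ge 1$ \emph{sinks} that are nodes of $T$, there is an $i$-level tree rooted at $r$ that contains every node of $R$ and has cost at most $2i(k/2)^{1/i}c(T)$; Lemma~\ref{withoutlevels} is the special case where $R$ is the leaf set of~$T$. We may assume $T$ is minimal, i.e.\ every leaf of $T$ lies in $R$, since pruning other leaves only decreases $c(T)$.

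\emph{Base case $i=1$.} I would replace $T$ by the star consisting of the direct edges from $r$ to each node of $R$. By the triangle inequality, $c(r,t)$ is at most the weight of the $r$--$t$ path in $T$ for each $t\in R$. Summing over $R$ and regrouping by edges, each edge of $T$ is counted once for every sink lying in the subtree below it, hence at most $k$ times, so the star has cost at most $k\cdot c(T)=2\cdot1\cdot(k/2)\,c(T)$. A star is a $1$-level tree because its only branch node is~$r$.

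\emph{Inductive step ($i\ge2$, bound known for $i-1$).} Put $s:=\lceil(k/2)^{1/i}\rceil$. List the sinks in the order a fixed DFS of $T$ first visits them and cut this list into $p:=\lceil k/s\rceil$ consecutive blocks $G_1,\dots,G_p$, each of size at most~$s$. For each block let $v_j$ be the least common ancestor in $T$ of the sinks in $G_j$, and let $T_j$ be the minimal subtree of $T$ spanning $\{v_j\}\cup G_j$. I would form the output as the union of (a)~an $(i-1)$-level tree, rooted at~$r$, that reaches all of $v_1,\dots,v_p$, obtained by applying the induction hypothesis with sink set $\{v_1,\dots,v_p\}$ (of size at most~$p$) to the minimal subtree $T'$ of $T$ spanning $r\cup\{v_1,\dots,v_p\}$, so $c(T')\le c(T)$; and (b)~for every~$j$, the star from $v_j$ to the sinks of~$G_j$. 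Any $r$-to-sink path reaches some $v_j$ after passing through at most $i-1$ branch nodes and then at most one further branch node ($v_j$ itself), so the tree has at most $i$ levels. Its cost is at most $2(i-1)(p/2)^{1/(i-1)}c(T')$ for part~(a) plus $\sum_j|G_j|\cdot c(T_j)\le s\sum_jc(T_j)$ for part~(b); substituting $s\approx(k/2)^{1/i}$ and $p\approx 2(k/2)^{1-1/i}$ gives $(p/2)^{1/(i-1)}\approx(k/2)^{1/i}$, so the two contributions are each $\approx(k/2)^{1/i}c(T)$ and sum to $2i(k/2)^{1/i}c(T)$, up to the slack introduced by the ceilings.

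The heart of the argument is the claim $\sum_jc(T_j)\le 2c(T)$, i.e.\ that each edge of $T$ lies in at most two of the subtrees~$T_j$. For an edge~$e$, the sinks in the subtree of $T$ below $e$ occupy a contiguous interval $[a,b]$ of the DFS sink order, and one checks that $e\in T_j$ precisely when the block $G_j$ meets $[a,b]$ without being contained in it, which among consecutive blocks holds only for the (at most) two that straddle the endpoints $a$ and~$b$. I expect this double-counting bound, together with the tuned choice of~$s$, to be the main obstacle; the remaining points — the ceilings, degenerate small-$k$ cases, and checking that the minimality assumption on $T$ is harmless — are routine.
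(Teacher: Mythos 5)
The paper itself states this lemma as an imported result of Helvig et al.\ and gives no proof, so there is nothing in the paper to compare against; I will assess your argument directly. Your approach is the standard height-reduction/tree-balancing argument (induction on~$i$, DFS sink order, equal-size blocks, LCAs, double-counting, and balancing $s$ against $p$), and the core ``each edge of $T$ lies in at most two of the subtrees $T_j$'' claim is verified correctly: the sinks below an edge $e$ form a DFS-contiguous interval, $e\in T_j$ iff $G_j$ meets that interval without being contained in it, and only the two blocks straddling the interval's endpoints can satisfy this. The arithmetic also balances as intended: $(i-1)(p/2)^{1/(i-1)} + s = i(k/2)^{1/i}$ at $s=(k/2)^{1/i}$, $p=k/s$. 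So the shape of the proof is right.

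There is, however, one concrete gap in the level count. You invoke the induction hypothesis in the form ``there is an $(i-1)$-level tree $T''$ rooted at $r$ containing all of $v_1,\dots,v_p$,'' but containment is not enough: if some $v_{j'}$ is an \emph{internal} node of $T''$ lying on the path from $r$ to some other $v_j$, then attaching the star at $v_{j'}$ turns $v_{j'}$ into a fresh branch node on that path, and the path $r\to v_j\to g$ can end up with $i+1$ branch nodes (e.g.\ $T''$ a path $r\to v_1\to v_2\to v_3$ with nontrivial stars at all three). To fix this you must strengthen the induction hypothesis to ``the sinks are \emph{leaves} of the constructed tree'' --- which your construction in fact delivers: the base-case star has the sinks as leaves, and in the inductive step the $v_j$'s are leaves of $T''$ (so never interior to an $r\to v_j$ path) and the new leaves are the members of the $G_j$'s; one also uses the metric property to shortcut $T''$ past any $R$-node it would otherwise pass through, so the union stays a tree. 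This is an easy repair, but as written the induction does not close. The remaining loose end --- the slack from the ceilings in $s=\lceil(k/2)^{1/i}\rceil$ and $p=\lceil k/s\rceil$, which you flag --- is genuine but routine; one has to check that the surplus can be absorbed (or choose $s,p$ a bit more carefully) so that the stated constant $2i$ is not exceeded, since the lemma has an explicit constant and no big-$O$.
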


A typical application of this lemma is the following: Assuming metric edge weights, any digraph can be considered complete by adding artificial edges corresponding to paths in $G$. Consider any optimization problem whose solution is a tree. We can establish an approximation algorithm for it by first finding an approximation for the best $p$-level solution, for some~$p$. We can then apply the lemma to obtain an approximation ratio to the original non-level-restricted problem. In our application, we have non-correlated edge costs and lengths. However, in order to apply the lemma, it suffices to observe that if there is a node pair $(u,w)$ without any edge $uw$ of length at most
$\ell(uv)+\ell(vw)$, for any node $v$, we could (conceptually) insert an edge with this length and cost $c(uv)+c(vw)$ representing this $u$-$v$-$w$ path. Observe that this would, in general, result in multiple edges connecting
the same node pair, with different length/cost combinations. We do not need to explicitly consider these additional edges. In our algorithm, we will directly identify the corresponding paths meeting at branch nodes. Furthermore, by adding edges of zero length and cost, we can in the following always assume that there is an optimum solution where all terminals appear as leaves.

\subsection{Algorithm}

As mentioned above, there is an FPTAS~\cite{hassin92:fptas-restr-shortest-paths,ergun-etal:fptaspath}
to solve the problem of finding the cheapest (with respect to edge costs~$c$) path from a node $u$ 
to a node $v$ of length at most~$D$ (with respect to edge length~$\ell$). We denote 
the result of this FPTAS by \textsc{MinCostPath}$(u,v,D)$.

Our algorithm employs a recursive greedy strategy, which has been originally
invented by Zelikovsky \cite{zelikovsky:acyclicSteiner}.  It has later
been applied by Kortsarz and Peleg \cite{kortsarz-peleg:slt} to 
undirected Shallow-Light Steiner Trees.  Specifically, they
give an $(2+\varepsilon,O(|R|^{\varepsilon}))$-approximation for
undirected graphs with uniform edge lengths and uniform distance
bounds.  Charikar et al. \cite{charikar99:approx-dst} reuse this
strategy for directed Steiner trees (without distance bounds) and
obtain an $O(|R|^{\varepsilon})$-approximation algorithm, devising
a particularly elegant analysis of recursive greedy.

Our algorithm uses five parameters, cf.~Algorithm~\ref{alg:algorithm}. The graph~$G$, costs $c$, and lengths~$\ell$ 
remain unchanged over all recursive calls to the procedure and are hence not explicitly included in these parameters.
The algorithm operates in \emph{levels} given by parameter $i\leq n$.
The higher the level, the better the approximation guarantee.
Parameters $r,R$, and $d$ denote the root, the terminal set, and the vector of
distance bounds, respectively.  Parameter $k\leq |R|$ specifies the
minimum number of terminals out of $R$, the resulting tree has to span (while meeting the distance bounds).
Setting $k=|R|$, the algorithm outputs a feasible directed shallow-light Steiner
tree.

Level $i=1$ of the algorithm works as follows.  For all terminals
$t\in R$, the algorithm computes an $r$-$t$ path $P_t$ by 
\textsc{MinCostPath}$(r,t,d(t))$.  Clearly, $P_t$
respects the length bound~$d(t)$.  The resulting tree consists of the union
of the $k$ cheapest (w.r.t.~$c$) of these paths.\footnote{As a side note, 
observe that one may be tempted to assume that some of these paths may coincide in the beginning, thus giving rise to a branch node
where the paths start to differ. We would hence, inadvertently, construct a tree with more than one level. We do not need to care about this issue: Firstly, 
in our cost computation (of the upper bound) we assume the worst case, i.e., that such common subpaths do not exist; if they would, the cost would 
only decrease, thus improving the approximative solution. Secondly, we can always (implicitly) consider the metric closure of $G$ (with multiedges
for different length-vs.-cost combinations); in this case we always find distinct paths.}

For $i>1$ we employ a greedy strategy to obtain a feasible solution $T$. 
Let the \emph{relative cost} of a tree $T'$ spanning $k'$
terminals be defined as $\varrho(T'):=c(T')/k'$.
Starting with empty~$T$, we iteratively
compute a subtree $T_\textnormal{best}$ of low relative cost
$\varrho(T_\textnormal{best})$, add it to $T$, remove the newly spanned
terminals from $R$, and adjust $k$ accordingly.

In order to compute $T_\textnormal{best}$, the algorithm exhaustively
tests all nodes $v$ and all values $k'\leq k$ to compute a cheap
tree $T'$ rooted at $v$ that spans at least $k'$ terminals.  (Note, that $k$ is adjusted by the algorithm.)  These
trees $T'$ are computed by applying the algorithm recursively but for
level $i-1$.  To obtain an $r$-rooted tree we connect $r$ to $v$ by a
path $P$.  This requires to adjust the distance bounds
accordingly in the above mentioned recursive calls.  An issue that
arises here is that the necessary properties of path $P$ are not clear a priori.
In general, we may not be able to use the shortest path (w.r.t.~$\ell$) 
as this might be too expensive (w.r.t.~$c$) to give a low relative cost.

To this end, we consider every possible path length up to $\ell(E)$, 
where the latter denotes the total length of all edges. This becomes tractable when we allow for 
a relative error of up to $(1+\varepsilon)$: we evaluate a geometrically increasing sequence of
length bounds $(1+\varepsilon)^j$, for non-negative integrals  $j$, and determine 
for each of these bounds the cheapest path $P_j$ respecting it.

\begin{algorithm}[tb]
  \caption{\label{alg:algorithm}Approximation of a directed shallow-light Steiner tree for $(G, c, \ell, r, R, d)$}
  \begin{algorithmic}[1]
  \Procedure{ShallowLight}{$i,r,R,d,k$}
    \If{no $k$ terminals in $R$ respect the distance bounds from $r$}
       \State \textbf{return} $\emptyset$
    \EndIf
    \If{$i=1$}
       \For{each terminal $t\in R$}
          \State $P_t\gets$\Call{MinCostPath}{$r,t,d(t)$}\label{linecompl1}
       \EndFor
       \State let $R'$ be the set of $k$ terminals with minimum $c(P_t)$
       \State \textbf{return} $\bigcup_{t\in R'}P_t$
    \EndIf
    \State $T\gets\emptyset$
    \While{$k>0$}\label{whileloop}
       \State $T_\textnormal{best}\gets\emptyset$
       \For{each $v\in V$ and each $k',1\leq k'\leq k$}\label{lineconsidervk}
          \For{$j=0,\dots,\lceil\log_{1+\varepsilon}\ell(E)\rceil$}
             \State $P_j\gets$\Call{MinCostPath}{$r,v,(1+\varepsilon)^j$}\label{searchbegin}
             \State $d'(u)\gets d(u)-\frac{\ell(P_j)}{1+\varepsilon}$ for each $u\in V$\label{linereduce}
             \State $T'\gets$\Call{ShallowLight}{$i-1,v,R,d',k'$}$\cup P_j$
             \If{$\varrho(T_\textnormal{best})>\varrho(T')$}
                $T_\textnormal{best}\gets T'$
             \EndIf\label{searchend}
          \EndFor
       \EndFor
       \State $T\gets T\cup T_\textnormal{best}$\label{addtreeline}
       \State $k\gets k-|R\cap V(T_\textnormal{best})|$
       \State $R\gets R-V(T_\textnormal{best})$
    \EndWhile
    \State \textbf{return} $T$
  \EndProcedure
  \end{algorithmic}
\end{algorithm}

\subsection{Analysis}

Let $\mathcal{G}:=(G,c,\ell,r,R,d)$ be a directed shallow-light 
Steiner tree problem instance as defined above. For the related problem of a 
\emph{$k$-terminal directed shallow light Steiner tree ($k$-DSLST)} we are given 
an instance $(\mathcal{G},k)$, $k\leq |R|$, and ask for the cheapest directed shallow 
light Steiner tree subject to any $k$-element subset of $R$. We observe that $k=|R|$ gives the original problem.
An \emph{$f(k)$-partial approximation} for $k$-DSLST is a procedure that finds a tree $T$ that is rooted at $r$, 
contains $1\leq k'\leq k$ terminals of $R$, and has relative cost $\varrho(T)\leq f(k)\cdot 
c(T^*)/k$. Here, $c(T^*)$ is the cost of an optimum solution to $k$-DSLST.

We will show later (cf.\ Lemma~\ref{partialapprox}) that the core of our algorithm in fact constitutes such a 
partial approximation. This allows us to adapt a lemma by Charikar et al.~\cite{charikar99:approx-dst} to obtain an approximation to 
the original problem, as summarized in the following lemma. While their result 
is dealing with Steiner trees and does hence not consider length restrictions, 
their proof is versatile enough to be carried out in an identical fashion for 
our following situation:
Let $\mathcal{P}(\mathcal{G},k)$ be a partial approximation routine. We 
construct an approximation algorithm $\mathcal{A}(\mathcal{G},k)$ as follows: 
First, $\mathcal{A}(\mathcal{G},k)$ calls $\mathcal{P}(\mathcal{G},k)$ which 
yields a tree $T'$ spanning some terminals $R'$. If $|R'|=k$, we are done. 
Otherwise, $\mathcal{A}(\mathcal{G},k)$ returns the union of $T'$ and the tree 
$T''$ resulting from $\mathcal{A}(\mathcal{G}'',k'')$ where $\mathcal{G}''$ is 
the problem instance with reduced terminal set $R\setminus R'$ and $k'':=k-|R'|$.

\begin{lemma}[Adaptation of Charikar et al.~\cite{charikar99:approx-dst}]\label{partialtofull}
 Given an $f(k)$-partial approximation $\mathcal{P}(\mathcal{G},k)$ and an 
algorithm $\mathcal{A}(\mathcal{G},k)$ as described above. If $f(x)/x$ is a 
decreasing function in $x$, then $\mathcal{A}$ is a $g(k)$-approximation, with 
$g(k)=\int_0^k(f(x)/x)dx$.
\end{lemma}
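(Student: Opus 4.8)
The plan is to prove Lemma~\ref{partialtofull} by analyzing the recursion tree of $\mathcal{A}(\mathcal{G},k)$ and charging the cost of each recursive call against the number of terminals it newly covers. Write $\mathrm{OPT}_k := c(T^*)$ for the optimum $k$-DSLST cost on instance $\mathcal{G}$, and note the monotonicity fact that $\mathrm{OPT}_{k'} \le \mathrm{OPT}_k$ for $k' \le k$, since any feasible $k$-terminal tree contains a feasible $k'$-terminal subtree (here I use that terminals may be assumed to appear as leaves, so deleting the excess is harmless). First I would set up a potential/induction on $k$: suppose $\mathcal{A}$ is called with parameter $k$, and the first call to $\mathcal{P}$ returns a tree $T'$ covering $k'$ terminals with $c(T')/k' \le f(k)\cdot \mathrm{OPT}_k / k$, hence $c(T') \le k' \cdot f(k)/k \cdot \mathrm{OPT}_k \le \mathrm{OPT}_k \cdot \int_{k-k'}^{k} (f(k)/k)\,dx$. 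Using that $f(x)/x$ is decreasing, $f(k)/k \le f(x)/x$ for all $x \le k$, so $c(T') \le \mathrm{OPT}_k \cdot \int_{k-k'}^{k} (f(x)/x)\,dx$.

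The inductive step then invokes $\mathcal{A}(\mathcal{G}'',k'')$ with $k'' = k-k'$ on the residual instance $\mathcal{G}''$ with terminal set $R\setminus R'$; by the induction hypothesis its cost is at most $g(k'')\cdot \mathrm{OPT}''_{k''}$, where $\mathrm{OPT}''_{k''}$ is the optimum for the residual instance. The key observation is that $\mathrm{OPT}''_{k''} \le \mathrm{OPT}_k$: the optimum $k$-DSLST tree $T^*$ for $\mathcal{G}$ restricted to (at least) $k''$ of the residual terminals is a feasible solution for $\mathcal{G}''$, so its cost, which is at most $c(T^*)=\mathrm{OPT}_k$, bounds $\mathrm{OPT}''_{k''}$. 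Therefore the total cost is at most
\begin{displaymath}
  \mathrm{OPT}_k \cdot \int_{k-k'}^{k}\frac{f(x)}{x}\,dx \;+\; g(k'')\cdot \mathrm{OPT}_k
  \;=\; \mathrm{OPT}_k\left(\int_{k''}^{k}\frac{f(x)}{x}\,dx + \int_0^{k''}\frac{f(x)}{x}\,dx\right)
  \;=\; g(k)\cdot\mathrm{OPT}_k,
\end{displaymath}
which closes the induction. The base case is $k'=k$ (the first call already covers all $k$ terminals), where $c(T') \le \mathrm{OPT}_k \int_0^k (f(x)/x)\,dx = g(k)\,\mathrm{OPT}_k$ directly.

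The main obstacle — and the place where one must be careful rather than compute — is the bookkeeping around which optimum we compare against at each level of recursion: the residual instances $\mathcal{G}''$ have strictly fewer terminals, and one must verify that their optima are dominated by the original $\mathrm{OPT}_k$ (not merely by some incomparable quantity), which is exactly where the "terminals as leaves / zero-cost zero-length edges" normalization and the monotonicity of $\mathrm{OPT}$ in $k$ are used. A secondary subtlety is handling the integral telescoping cleanly when $k'$ (equivalently $k''$) is an integer but the bound $g$ is defined via a continuous integral; the decreasing-ness of $f(x)/x$ is what makes the discrete sum $\sum$ of the per-call charges $k_j \cdot f(k_j)/k_j$ fit under the continuous integral $\int_0^k (f(x)/x)\,dx$, by comparing each term $k' \cdot f(k)/k$ against $\int_{k-k'}^k (f(x)/x)\,dx$ as above. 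Everything else is routine unrolling of the recursion, and since this is an adaptation of the argument of Charikar et al.~\cite{charikar99:approx-dst}, the only new content is checking that length bounds never interfere — which they do not, because $\mathcal{P}$ already returns trees respecting them and the residual instance inherits the same bounds $d$ on the surviving terminals.
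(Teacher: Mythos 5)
Your proof is correct and is in essence the argument the paper defers to. The paper gives no proof of Lemma~\ref{partialtofull} --- it explicitly states the statement is an adaptation of Charikar et al.\ \cite{charikar99:approx-dst} and that ``their proof is versatile enough to be carried out in an identical fashion'' --- and your induction on $k$ with the integral charging scheme (bound the first call's cost by $\int_{k-k'}^{k}(f(x)/x)\,dx\cdot\opt_k$ via monotonicity of $f(x)/x$, bound the residual optimum $\opt''_{k''}\le\opt_k$ by pruning $T^*$ to $k''$ of the surviving terminals, then telescope with the inductive bound $g(k'')\opt''_{k''}$) is exactly that argument. The one place worth a second look is your side remark that ``the discrete sum of the per-call charges $k_j\cdot f(k_j)/k_j$ fit under the integral'': the charge per call is really $(k_j-k_{j+1})\cdot f(k_j)/k_j$ (newly covered terminals times density), not $k_j\cdot f(k_j)/k_j=f(k_j)$; your main derivation uses the correct quantity $k'\cdot f(k)/k$, so this is only a slip in the commentary and does not affect the proof.
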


In the light of $\mathcal{P}$ and $\mathcal{A}$, the identification of 
$T_\textnormal{best}$ in Algorithm~\ref{alg:algorithm} corresponds to 
$\mathcal{P}$ while the outer while loop resembles $\mathcal{A}$.
It remains to show that our algorithm meets the criteria of an $f(k)$-partial 
approximation with $f(x)/x$ being a decreasing function. At its core, the proof 
strategy is similar to Charikar et al., but we have to carefully consider our 
length restrictions and violations within the recursion.

\begin{lemma}\label{partialapprox}
Consider \textsc{ShallowLight}$(i,r,R,d,k)$ (Alg.~\ref{alg:algorithm}), 
which iteratively computes~$T$. Let $\bar T:=T_\textnormal{best}$ be any tree 
incorporated in the current solution (line~\ref{addtreeline}). It violates the 
length bounds by a factor of at most $(1+\varepsilon)$. For $i\geq 2$, $\bar 
T$'s relative cost $\varrho(\bar T)$ is at most $(i-1)$ times the relative cost 
$\varrho^*:=\varrho^*_{\bar R,\bar k}$ of the optimum solution $T^*:=T^*_{\bar 
R,\bar k}$ to $\bar k$-DSLST with $i$ levels, where $\bar R$ and $\bar k$ are the 
values for $R$ and $k$ currently used by the algorithm, respectively.
\end{lemma}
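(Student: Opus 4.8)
The plan is to induct on the level $i$, following the recursive-greedy analysis of Charikar et al., but tracking the length-bound violation explicitly through the recursion. The base case $i=2$ is essentially direct: here the recursive calls are to level $i-1=1$, which returns unions of \textsc{MinCostPath} outputs respecting the (reduced) distance bounds $d'$ exactly, so the only violation comes from the single path $P_j$ prepended at the top. Since $P_j$ has length at most $(1+\varepsilon)^j$ and we subtract $\ell(P_j)/(1+\varepsilon)$ from the bounds in line~\ref{linereduce}, a terminal reached via $P_j$ followed by a level-$1$ subpath of length $\le d'(u) = d(u) - \ell(P_j)/(1+\varepsilon)$ is reached at total length $\le \ell(P_j) + d(u) - \ell(P_j)/(1+\varepsilon) \le d(u) + \ell(P_j)\cdot\varepsilon/(1+\varepsilon) \le d(u) + \varepsilon \cdot d(u)$-ish — more carefully, one uses $\ell(P_j) \le (1+\varepsilon)\cdot(\text{shortest feasible length})$ and the fact that in the optimal $i$-level tree the path to the chosen branch node $v$ together with the subtree below $v$ respects $d$, so the "true" length available below $v$ is what $d'$ approximates up to the factor. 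I would state the invariant as: every terminal spanned by $\bar T$ is reached from $r$ within $(1+\varepsilon)\cdot d(\cdot)$, and verify it propagates.

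For the cost/relative-cost bound, I would fix the optimum $i$-level tree $T^* = T^*_{\bar R,\bar k}$ and decompose it at its top-level branch node(s): $T^*$ consists of an $r$-to-$b$ path $Q$ (for $b$ the first branch node, or $b=r$) followed by $\le$ several subtrees $T^*_1,\dots,T^*_q$ hanging off $b$, each with at most $i-1$ levels. Standard averaging shows some subtree $T^*_j$ (together with the shared path $Q$, or a suitable charging of $Q$) has relative cost at most $\varrho^*$. The key point is that the greedy inner loop of \textsc{ShallowLight} at level $i$ tries \emph{every} node $v$ and every $k'$, and in particular tries $v = b$ with $k'$ = number of terminals in that subtree, and tries a geometric length bound $(1+\varepsilon)^j$ that is within a $(1+\varepsilon)$ factor of $\ell(Q)$. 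By the inductive hypothesis applied to the level-$(i-1)$ call \textsc{ShallowLight}$(i-1,v,R,d',k')$, the returned subtree has relative cost at most $(i-2)$ times the optimum relative cost of the residual $(i-1)$-level problem, which is at most the relative cost contributed by $T^*_j$. Adding the cost of $P_j$ (at most $c(Q)$, by the FPTAS guarantee against the cost-minimal path of length $\le (1+\varepsilon)^j \ge \ell(Q)$) and dividing by the $\ge k'$ terminals spanned, the relative cost of the candidate $T'$ is at most $(i-2+1)\varrho^* = (i-1)\varrho^*$; since $T_\textnormal{best}$ minimizes relative cost over all candidates, $\varrho(\bar T) \le (i-1)\varrho^*$.

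The main obstacle is handling the distance bounds cleanly inside the recursion: after the algorithm subtracts $\ell(P_j)/(1+\varepsilon)$ rather than $\ell(P_j)$, the recursive instance $\mathcal{G}'$ with bounds $d'$ has an optimum that is at least as good as the "truncation" of $T^*_j$ below $v$, but one must check that $T^*_j$ truncated below $v = b$ is actually \emph{feasible} for $d'$ — i.e. that its root-to-terminal lengths are $\le d'(u)$. This holds because in $T^*$ those terminals are reached within $d(u)$ via a prefix of true length $\ell(Q) \ge \ell(P_j)/(1+\varepsilon)$ (using $\ell(P_j) \le (1+\varepsilon)\ell(Q)$ from the geometric-grid choice of $j$ and the FPTAS), so the suffix length is $\le d(u) - \ell(Q) \le d(u) - \ell(P_j)/(1+\varepsilon) = d'(u)$. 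A secondary subtlety is that $T^*_j$'s suffix has $\le i-1$ levels but its \emph{own} root-to-leaf lengths are what matter, and one has to make sure the violation factor does not compound across levels — it does not, because the violation is introduced only once per prepended path and the recursive bounds $d'$ are set so that the cumulative true length stays within $(1+\varepsilon)$ of the original $d$; I would phrase this as a telescoping argument on the sum of the grid-rounding errors, noting that $\sum_{\text{levels}} \ell(P_j)\varepsilon/(1+\varepsilon) \le \varepsilon \cdot d(\cdot)$ since the $\ell(P_j)/(1+\varepsilon)$ terms telescope against the decreasing $d'$.
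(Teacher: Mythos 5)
Your proof plan captures the right shape of the argument --- induction on $i$, decomposition of the optimal $i$-level tree at a branch point, the FPTAS-supplied $r$-$v$ path, the $(1+\varepsilon)$-invariant on distances that telescopes through the recursion, and the $(i-2)+1=(i-1)$ combination. The length argument is essentially what the paper does: with $d'(u)=d(u)-\ell(P_j)/(1+\varepsilon)$ and the inductive guarantee that level $i-1$ violates $d'$ by at most $(1+\varepsilon)$, the total root-to-$u$ length is bounded by $\ell(P_j)+(1+\varepsilon)d'(u)=(1+\varepsilon)d(u)$, a clean cancellation rather than a sum of per-level errors. Your observation that $T^*$ truncated at the chosen branch node is feasible for $d'$ (because $\ell(P_j)\le(1+\varepsilon)\ell(Q)$ forces $\ell(P_j)/(1+\varepsilon)\le\ell(Q)$) is also correct and matches what the paper needs.

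However, there is a genuine gap in your cost argument. You write that, by the inductive hypothesis applied to \textsc{ShallowLight}$(i-1,v,R,d',k')$, ``the returned subtree has relative cost at most $(i-2)$ times the optimum relative cost of the residual $(i-1)$-level problem.'' The inductive hypothesis (this very lemma, at level $i-1$) only bounds the relative cost of each \emph{individual} $T_{\mathrm{best}}$ added in an iteration of the while-loop; it says nothing directly about the \emph{union} of all increments returned by the recursive call. That union can have substantially worse relative cost, because the residual optimum relative cost $\varrho^*_{\bar R',\bar k'}$ that each increment is compared against \emph{deteriorates} as terminals are consumed: late in the recursion, when most of the optimal subtree's terminals are already covered, the cheapest way to pick up the remaining few may be much more expensive per terminal. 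So the bound ``$(i-2)\varrho^*$ for the whole recursive output'' does not follow, and is false in general.

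The paper closes this gap with a stopping-point argument that is entirely absent from your plan. Inside the call \textsc{ShallowLight}$(i-1,v^*,R,d',k_{v^*})$ one identifies $S'$, the first increment that brings the cumulative coverage to at least $k_{v^*}/(i-1)$ terminals; let $s_1$ be that coverage. Up to and including $S'$, fewer than $k_{v^*}/(i-1)$ of $T_{v^*}$'s terminals had been consumed, so $T_{v^*}$ still offers a solution covering $\ge\frac{i-2}{i-1}k_{v^*}$ of the residual terminals at cost $\le C_{v^*}$, which keeps the residual optimum relative cost at most $\frac{i-1}{i-2}\cdot C_{v^*}/k_{v^*}$; hence by the IH each of those increments has relative cost at most $(i-2)\cdot\frac{i-1}{i-2}\cdot C_{v^*}/k_{v^*}$, and so does their union $S_1$. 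The crucial observation is then that the algorithm \emph{also} tries $k'=s_1$, and the recursive call \textsc{ShallowLight}$(i-1,v^*,R,d',s_1)$ --- making the identical greedy choices --- returns exactly $S_1$. Joining that with the path $P_j$ of cost $\le c_{j,v^*}$ yields the candidate $T'$ with $\varrho(T')\le(i-1)(c_{j,v^*}+C_{v^*})/k_{v^*}\le(i-1)\varrho^*$, which bounds $\varrho(T_{\mathrm{best}})$ since the algorithm minimizes over candidates. Without this mechanism --- the threshold $S'$, the uniform bound on the residual optimum before reaching it, and the enumeration over $k'$ to make the recursion stop precisely at $S_1$ --- your inductive step does not go through.
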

\begin{proof}
Observe that, for $i>1$, $\bar T$ consists of an $r$-$v$ path $\bar P$ and a tree 
(computed recursively) with at most $i-1$ levels rooted at~$v$. We prove the 
lemma by induction on $i$.

First consider the length property of $\bar T$. For $i=1$, it
trivially holds by the direct application of the FPTAS 
(line~\ref{linecompl1}). For $i\geq 2$, we can bound the length of $\bar P$ by 
$(1+\varepsilon)^j< \ell(\bar P)\leq (1+\varepsilon)^{j+1}$. By 
line~\ref{linereduce}, the permissible length for a connection from $v$ to some 
node $u$ in $\bar T\setminus\bar P$ is bounded by $d'(u)\leq 
d(u)-(1+\varepsilon)^{j}$. By induction, we will violate this bound by a 
factor of at most $(1+\varepsilon)$, i.e., the length of a connection between $r$ and $u$ in $\bar T$ 
will be at most $(1+\varepsilon)^{j+1} + (1+\varepsilon)(d(u) - 
(1+\varepsilon)^j)=(1+\varepsilon)d(u)$.

Now, consider the cost property. It holds for $i=2$.
Assume $i\geq 3$ and that the claim holds for all level restrictions less 
than~$i$. 
Let $v$ denote a \emph{level-child} of $r$ with respect to $T^*$, i.e., all inner nodes of the path $P_{j,v}$ between $r$ and $v$ in 
$T^*$ are of degree 2. The subtree $T_v\subset T^*$ rooted at $v$ has (at most) $i-1$ 
levels. (By augmenting $G$ with sufficient 0-cost 0-length edges, we can assume that $T^*_v$ 
has precisely $i-1$ levels.) Let $c_{j,v}$ and $\ell_v\leq(1+\varepsilon)^j$ denote 
the cost and length of $P_{j,v}$, respectively. Let $C_v$ denote the cost of $T_v$ 
and $k_v$ the number of terminals in $T_v$. In the following, consider the 
node $v^*$, level-child of $r$ in $T^*$, with minimal $\varrho_{v^*}:= 
(c_{j,v^*}+C_{v^*})/k_{v^*}<\varrho^*$.

At some point at level $i$, our algorithm will also consider node $v^*$ and number 
$k_{v^*}$. The computed $r$-$v^*$ path may be up to 
$(1+\varepsilon)\ell_{v^*}\leq(1+\varepsilon)^{j+1}$ long. We investigate the 
behavior of \textsc{ShallowLight}$(i-1,{v^*},R,d',k_{v^*})$. It returns an 
$(i-1)$-level tree $S$ that is, again, iteratively constructed. Let $S'$ be 
the tree incorporated into $S$ by the algorithm such that the current $S$ now 
contains at least $k_{v^*}/(i-1)$ terminals for the first time. Let $S_0,S_1$ be 
the solution trees before and after adding $S'$, respectively. Furthermore, let 
$s_0,s_1$ be the number of $\bar R$-nodes covered by $S_0,S_1$, respectively. 
Observe that $s_1\geq k_{v^*}/(i-1)$.

Consider the nodes not covered before $S'$: $|T_{v^*}\cap\bar R|\geq k_{v^*} 
- s_0 = k_{v^*} - k_{v^*}/(i-1)=\frac{i-2}{i-1}k_{v^*}$. Since we can cover all 
these nodes at cost at most $C_{v^*}$, we have an upper bound of 
$\frac{i-1}{i-2}C_{v^*}/k_{v^*}$ on the relative cost for the uncovered 
terminals. By our induction hypothesis, we know that we will hence find a 
solution---violating the length restrictions by at most a factor of 
$(1+\varepsilon)$---with relative cost at most 
$(i-2)\frac{i-1}{i-2}C_{v^*}/k_{v^*}$ for $S'$. This upper bound naturally holds 
for each subtree that is incorporated into $S$ before $S'$. Consequently, the 
relative cost of $S_1$ is also at most $(i-1)C_{v^*}/k_{v^*}$.

Now, observe that our algorithm will not only compute 
\textsc{ShallowLight}$(i-1,{v^*},R,d',k_{v^*})$ but also 
\textsc{ShallowLight}$(i-1,{v^*},R,d',s_1)$. Observe the equally modified length 
restrictions~$d'$. In the latter case, the algorithm will stop after adding $S'$ 
to $S$, returning this $S$ as its $(i-1)$-level solution tree of relative cost 
$\varrho(S)\leq(i-2)C_{v^*}/k_{v^*}$. On level $i$, this $S$ will be joined with 
the computed path $\bar P$ of cost at most that of $P_{j,v^*}$ (with corresponding~$j$) 
and violating the length constraints by at most $(1+\varepsilon)$ as discussed 
above. Together, they form a tree $T'$ with $\varrho(T')=\varrho(S)+c_{j,v^*}/s_1 \leq 
(i-2)C_{v^*}/k_{v^*} + c_{j,v^*}/(k_{v^*}/(i-1)) \leq 
(i-1)(c_{j,v^*}+C_{v^*})/k_{v^*}=(i-1)\varrho_{v^*} = (i-1)\varrho^*$.
\end{proof}

We are now able to prove the approximation result for directed shallow-light Steiner 
trees.

\begin{theorem}[Revisited]
 The above algorithm is a bi-criteria 
$(1+\varepsilon_1,O(|R|^{\varepsilon_2}))$-approximation for directed shallow-light Steiner trees: for 
arbitrary small $\varepsilon_1,\varepsilon_2>0$, it gives a solution at most 
$O(|R|^{\varepsilon_2})$ times more expensive than the optimum, while violating the 
length constraints by a factor of at most $(1+{\varepsilon_1})$. For fixed 
$\varepsilon_2$, its runtime is polynomial in the input size and~$\varepsilon_1$.
\end{theorem}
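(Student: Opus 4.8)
The plan is to run \textsc{ShallowLight}$(i,r,R,d,|R|)$ with the level parameter chosen as $i:=\lceil 1/\varepsilon_2\rceil$ (a constant once $\varepsilon_2$ is fixed) and the internal parameter $\varepsilon:=\varepsilon_1$, and to verify the two halves of the bi-criteria guarantee separately. The length property is immediate from Lemma~\ref{partialapprox}: the returned tree is the union, over the iterations of the outer while loop, of the subtrees $T_\textnormal{best}$ aggregated in line~\ref{addtreeline}, and each of them respects every distance bound up to a factor of $(1+\varepsilon_1)$; hence so does their union. Observe that this already covers the recursion, since Lemma~\ref{partialapprox} establishes the $(1+\varepsilon_1)$-slack inductively across levels.

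For the cost ratio, the key step is to show that the inner routine --- one computation of $T_\textnormal{best}$, which plays the role of $\mathcal{P}$ in the framework of Lemma~\ref{partialtofull} --- is an $f(k)$-partial approximation with $f(k)=O(i^2 k^{1/i})$ and $f(x)/x$ decreasing. This follows by composing two facts: Lemma~\ref{partialapprox} bounds $\varrho(T_\textnormal{best})$ by $(i-1)$ times the relative cost of the \emph{$i$-level} optimum of the current (sub-)instance, and Lemma~\ref{withoutlevels} (Helvig et al.), applied in the conceptual $\ell$-metric closure of $G$ discussed below that lemma, bounds this $i$-level optimum by $2i(k/2)^{1/i}$ times the ordinary (unrestricted-level) optimum of the same instance. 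Multiplying, $f(k)=2i(i-1)(k/2)^{1/i}=O(i^2 k^{1/i})$, and since $i\ge 2$ the quotient $f(x)/x=\Theta(i^2 x^{1/i-1})$ is decreasing. Lemma~\ref{partialtofull} then gives that the outer while loop --- the overall algorithm --- is a $g(k)$-approximation with $g(k)=\int_0^k (f(x)/x)\,dx=O(i^3 k^{1/i})$; the monotonicity its proof relies on (the optimum of a sub-instance with a smaller terminal set is no larger than the original optimum) survives the presence of the distance bounds. Instantiating with $k=|R|$ and $i=\lceil 1/\varepsilon_2\rceil$ bounds the cost by $O(i^3|R|^{1/i})=O(|R|^{\varepsilon_2})$, absorbing the constant $i^3$ since $|R|^{1/i}\le|R|^{\varepsilon_2}$. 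The running time is obtained by unrolling the recursion: its depth is $i=O(1)$, and at each node the work is polynomial --- the while loop runs at most $|R|$ times, each time trying $O(n\cdot|R|)$ pairs $(v,k')$ and $O(\varepsilon_1^{-1}\log\ell(E))$ length guesses $j$, plus one \textsc{MinCostPath} (FPTAS) call per guess --- so the total is polynomial in the input size and in $1/\varepsilon_1$ for fixed $\varepsilon_2$.

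The step I expect to be the main obstacle is the clean invocation of Lemma~\ref{withoutlevels}: it is phrased for a single, metrically-weighted cost function, whereas here costs and lengths are non-correlated and the witness $i$-level tree must in addition respect the original distance bounds. The way around this is precisely the conceptual metric closure sketched before the algorithm: one views $G$ as completed by (multi-)edges, an edge representing a path $Q$ carrying length $\ell(Q)$ and cost $c(Q)$, so that length is metric while cost is inherited path-wise. Helvig's level reduction applied to the global optimum $T^*$ then reattaches subtrees of $T^*$ to their ancestors via such edges; since these edges are sub-paths of $T^*$, every root-to-leaf walk of the reduced tree is a concatenation of sub-paths lying on a single root-to-leaf path of $T^*$, so its $\ell$-length does not exceed the corresponding distance in $T^*$, hence the original bound, while its $c$-cost is accounted for exactly as in Lemma~\ref{withoutlevels}. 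Checking that the upper-bound construction of Lemma~\ref{withoutlevels} is indeed of this ``reattach to ancestors'' form, and that the zero-length/zero-cost padding making all terminals leaves is harmless, is the one place that needs genuine care; everything else is the bookkeeping of the $(1+\varepsilon_1)$ slack through the recursion, which is already discharged by Lemma~\ref{partialapprox}.
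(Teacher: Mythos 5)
Your proof follows essentially the same route as the paper: set $i=\Theta(1/\varepsilon_2)$, derive the $f(k)$-partial approximation by composing Lemma~\ref{partialapprox} with Lemma~\ref{withoutlevels}, and convert it to an overall guarantee via Lemma~\ref{partialtofull}, with the same $O(i^3k^{1/i})$ integral and the same recursion-unrolling runtime argument. The extra paragraph you devote to why the Helvig reduction respects the distance bounds in the non-correlated cost/length setting is precisely the metric-closure discussion the paper places just after Lemma~\ref{withoutlevels}, so it is a welcome elaboration rather than a deviation.
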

\begin{proof}
Lemma~\ref{partialapprox} shows that each chosen $T_\mathrm{best}$ on level $i$ 
has a relative cost of at most $(i-1)$ the relative-cost-optimum $i$-level tree 
w.r.t.\ $\bar R, \bar k$. By Lemma~\ref{withoutlevels}, the latter approximates 
the optimum tree without level restrictions. So, overall, each $T_\mathrm{best}$ 
is a $(i-1)2i(\bar k/2)^{1/i}$-partial approximation for $k$-DSLST. By 
Lemma~\ref{partialtofull}, this gives a $g(k)$-approximation for $k$-DSLST with 
$$g(k) = \int_0^k \left((i-1)2i(\bar x/2)^{1/i}/x\right)dx = 
\frac{2i^2(i-1)}{2^{1/i}}k^{1/i}.$$ We hence have an 
$O(|R|^{\varepsilon_2})$-approximation for directed shallow-light Steiner trees (=$|R|$-DSLST)---w.r.t.\ violating 
the length bounds by at most a factor of $(1+\varepsilon_1)$---by choosing a 
suitable $i$ inversely correlated to~$\varepsilon_2$.

Consider the running time of our algorithm. {\sc MinCostPath} is an FPTAS 
with running time $O(mn/\varepsilon_1)$~\cite{ergun-etal:fptaspath}. Consider any 
call to {\sc ShallowLight} w.r.t.\ some $i,k$. For $i=1$, it requires 
$O(|R|nm/\varepsilon_1)$ time. Otherwise, we may add $O(k)$ different trees
$T_\mathrm{best}$ and the block of lines \ref{searchbegin}--\ref{searchend} is 
repeated $O(nk^2\log\ell(E))$ times. Overall, any run of the procedure 
(disregarding recursive calls) requires $O(n^2mk^2\log\ell(E)/\varepsilon_1)$ 
time. For overall $i$ levels, there are $O(n^{i-1}k^{2i-2})$ recursive 
invocations, inducing an overall runtime of 
$O(n^{i+1}mk^{2i}\log\ell(E)/\varepsilon_1)$. Clearly, $\log\ell(E)$, the 
logarithm of the sum of all edge lengths, is polynomially bounded by the input 
size, and, by choice of $i$ above, $i$ is directly correlated to (and only dependent on) $1/\varepsilon_2$.
\end{proof}

\section{Conclusions: Light-Weight Directed Spanners}\label{sec:lwds}

We conclude with sketching another application of our
shallow-light Steiner tree result. We obtain a 
bi-criteria approximation algorithm for light-weight directed 
$\alpha$-spanners (cf.\ Definition~\ref{def:alphaspan}).  To the best of
our knowledge no non-trivial result is known for this problem.

We employ a two-stage approach
similar to the one used for directed sparse
spanners~\cite{dinitz11:directed-spanners,berman-etal:apx-spanners-dsf}
and for our network design problem in
Section~\ref{sec:network-design-bounded}.  Thin and thick pairs are
defined analogously to
Section~\ref{sec:network-design-bounded}.  Thin pairs can be settled
as in \cite{berman-etal:apx-spanners-dsf} as only the linearity of the
objective function is used there.  For settling thick pairs, a set of
$\Theta(\sqrt{n}\log n)$ many nodes is sampled.  In the case of
sparse spanners \cite{berman-etal:apx-spanners-dsf} it is sufficient
to compute a shortest path in-tree and
a shortest path out-tree for each of these sampled nodes, and take the union of these trees.  Since
each of these trees has at most $n-1$ edges, which is clearly a lower
bound on $\opt$, the total cost for this stage is $\tilde
O(\sqrt{n}\cdot\opt)$.  It is shown that this procedure settles all
thick pairs with high probability. In the case of light-weight
spanners we compute a directed shallow-light spanning
tree for each sampled node.  More precisely, let $u$ be the sampled node.  We compute a
shallow-light spanning tree $T$ rooted at $u$ such that for each node
$v\in V$ its distance $\bar\ell_T(u,v)$ is at most
$\alpha\cdot\bar\ell_G(u,v)$.  Since the optimum solution to the spanner
problem ensures the existence of a feasible solution to this problem, we can compute
such a tree of cost at most $O(n^\varepsilon\opt)$ using Theorem~\ref{thm:slst}.  Analogously, we can
compute an in-tree with root $u$ and the respective distance bounds.
The total cost of the union of all such spanning trees is
$O(n^{1/2+\varepsilon}\opt)$.  

Unfortunately, the resulting solution is
not necessarily feasible since the stretch factor~$\alpha$ may be
violated.  We can still argue that the solution gives a bi-criteria
approximation with bounded stretch factor.  To see this, consider a
thick pair $(u,v)$ and assume that we sample a node $z$ such that
there is a $u$-$v$ path visiting $z$ of length at most $\alpha\cdot
\bar\ell_G(u,v)$.  Hence $\bar\ell_G(u,z)+\bar\ell_G(z,v)\leq\alpha\bar\ell_G(u,v)$.
Using the paths provided by the shallow-light in-tree and the
shallow-light out-tree computed by our algorithm we can find a path of
length at most
$(\alpha+\varepsilon)\alpha\bar\ell_G(u,z)+(\alpha+\varepsilon)\alpha\bar\ell_G(z,v)\leq(\alpha+\varepsilon)\alpha\bar\ell_G(u,v)$
in our output graph.  We have:
\begin{theorem}[Revisited]
  The above algorithm is a bi-criteria
  $(\alpha+\varepsilon,O(n^{1/2+\varepsilon}))$-approximation for
  light-weight directed $\alpha$-spanners.  The running time depends
  on $n$ and $\varepsilon$ and is polynomial in $n$ for any fixed
  $\varepsilon>0$.
\end{theorem}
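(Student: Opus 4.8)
The plan is to follow the two-stage thin/thick pair scheme already established in Section~\ref{sec:network-design-bounded}, adapting it so that \emph{distances in $G$} replace the absolute bound~$L$. First I would recall the classification: for a pair $(u,v)$, let $\mathcal{P}_{uv}$ be the set of $u$-$v$ paths of length at most $\alpha\cdot\bar\ell_G(u,v)$, let $V_{uv}$ be the union of their node sets, and call $(u,v)$ thin if $|V_{uv}|\le\sqrt n$ and thick otherwise. For the thin pairs I would set up the analogous path-based LP with the constraint $\sum_{P\in\mathcal{P}_{uv}}f_P\ge 1$ and cover the corresponding ``$u$-$v$-stretching cuts'' by randomized rounding with scaling factor $\gamma=\sqrt n\log n$. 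As the excerpt notes, the anti-spanner / stretching-cut argument of Berman et al.\ uses only linearity of the objective, so Lemmas~\ref{lem:settling-by-cuts}--\ref{lem:thinsettled} carry over verbatim with $\alpha\cdot\bar\ell_G(u,v)$ in place of $L$; the separation oracle is still the length-bounded shortest path FPTAS, now invoked with bound $\alpha\cdot\bar\ell_G(u,v)$ for each pair. This yields a subgraph $H_1$ of cost $O(\gamma(1+\varepsilon)\opt)=O(n^{1/2+\varepsilon}\opt)$ that settles all thin pairs with high probability.

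Next I would handle the thick pairs by sampling $\delta=\Theta(\sqrt n\log n)$ nodes and, for each sampled node $z$, building a shallow-light out-tree and a shallow-light in-tree rooted at $z$. The key point is the choice of distance bounds fed to Theorem~\ref{thm:slst}: for the out-tree we set $d(v):=\alpha\cdot\bar\ell_G(z,v)$ for every $v\in V\setminus\{z\}$, and for the in-tree (computed in the edge-reversed graph) we set the bound at node $v$ to $\alpha\cdot\bar\ell_G(v,z)$. I would argue feasibility of the sub-instance from the optimum spanner $H^*$: inside $H^*$ the shortest-path out-tree from $z$ already satisfies $\bar\ell(z,v)\le\alpha\bar\ell_G(z,v)$, so it is a feasible shallow-light spanning tree of cost at most $\opt$; hence Theorem~\ref{thm:slst} returns a tree of cost $O(|V|^{\varepsilon})\opt=O(n^{\varepsilon})\opt$ violating each bound by at most $(1+\varepsilon)$. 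Summing over the $O(\delta)$ sampled nodes gives total cost $O(\delta n^{\varepsilon}\opt)+O(\gamma\opt)=O(n^{1/2+\varepsilon}\opt)$ for $H_1\cup H_2$.

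For correctness of the stretch, I would follow the sketch in the excerpt: fix a thick pair $(u,v)$; since $|V_{uv}|\ge\sqrt n$, with high probability some sampled $z$ lies on a $u$-$v$ path of length $\le\alpha\bar\ell_G(u,v)$, whence $\bar\ell_G(u,z)+\bar\ell_G(z,v)\le\alpha\bar\ell_G(u,v)$. Concatenating the $(1+\varepsilon)$-violated in-tree path $u\rightarrow z$ of length $\le(1+\varepsilon)\alpha\bar\ell_G(u,z)$ with the out-tree path $z\rightarrow v$ of length $\le(1+\varepsilon)\alpha\bar\ell_G(z,v)$ gives a $u$-$v$ path of length $\le(1+\varepsilon)\alpha\bigl(\bar\ell_G(u,z)+\bar\ell_G(z,v)\bigr)\le(1+\varepsilon)\alpha\cdot\alpha\bar\ell_G(u,v)$. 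Rescaling $\varepsilon$ (and absorbing the harmless extra $\alpha$ factor, or stating the bound as $\alpha(1+\varepsilon)$) this is at most $(\alpha+\varepsilon')\bar\ell_G(u,v)$, which is the required bi-criteria guarantee; the failure probabilities over $O(n^2)$ pairs and $O(n^2\sqrt n^{\sqrt n})$ stretching cuts are polynomially small as in Lemmas~\ref{lem:thinsettled} and the revisited Theorem. The running time is dominated by solving the LP (ellipsoid with the path-FPTAS oracle) and by $O(\sqrt n\log n)$ calls to the shallow-light Steiner tree algorithm, each polynomial in $n$ for fixed $\varepsilon$.

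The main obstacle I anticipate is bookkeeping the interaction between the two $(1+\varepsilon)$-type slacks and the multiplicative $\alpha$: the shallow-light subroutine violates its own distance bounds by $(1+\varepsilon)$, and we separately want the final stretch to read ``$\alpha+\varepsilon$'' rather than ``$\alpha(1+\varepsilon)$'', so one must be careful to feed Theorem~\ref{thm:slst} with parameter $\varepsilon_1$ chosen so that $\alpha(1+\varepsilon_1)\le\alpha+\varepsilon$, i.e.\ $\varepsilon_1\le\varepsilon/\alpha$, and to keep $\varepsilon_2$ (controlling the $|R|^{\varepsilon_2}$ cost factor, hence the recursion depth $i$) decoupled from it. A secondary subtlety is that $\bar\ell_G(z,v)$ may be infinite for some pairs (no $z$-$v$ path at all); such terminals should simply be dropped from $R$ before invoking the subroutine, which is harmless since an infeasible pair $(z,v)$ imposes no constraint.
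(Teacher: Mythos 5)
Your outline matches the paper's approach: thin pairs via the anti‑spanner LP rounding with the bound $L$ replaced pairwise by $\alpha\cdot\bar\ell_G(u,v)$, thick pairs via $\Theta(\sqrt n\log n)$ sampled roots with a shallow‑light in‑ and out‑tree each, distance bounds $d(v)=\alpha\cdot\bar\ell_G(z,v)$ so that the optimum spanner certifies feasibility, and the observation about dropping unreachable terminals. The cost accounting $O(\delta n^{\varepsilon}+\gamma)\opt=O(n^{1/2+\varepsilon})\opt$ and the decoupling of $\varepsilon_1$ (length slack) from $\varepsilon_2$ (cost exponent) are also correct.

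However, the final line of your stretch analysis has a genuine gap. Your derivation correctly yields a $u$-$v$ path of length at most $(1+\varepsilon_1)\alpha\bigl(\bar\ell_G(u,z)+\bar\ell_G(z,v)\bigr)\le(1+\varepsilon_1)\alpha^2\,\bar\ell_G(u,v)$, but the extra factor of $\alpha$ is \emph{not} harmless and cannot be absorbed by rescaling $\varepsilon$: for $\alpha=2$ and tiny $\varepsilon_1$, the path length is about $4\bar\ell_G(u,v)$, which no choice of $\varepsilon'$ makes ``$(\alpha+\varepsilon')\bar\ell_G(u,v)=(2+\varepsilon')\bar\ell_G(u,v)$''. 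The $\alpha$ factor comes from the fact that the certificate for a thick pair is a $z$ lying on a path of length $\alpha\bar\ell_G(u,v)$ (not on a shortest path), so $\bar\ell_G(u,z)+\bar\ell_G(z,v)$ is only bounded by $\alpha\bar\ell_G(u,v)$. The correct reading of the theorem's bi-criteria pair $(\alpha+\varepsilon,\cdot)$—consistent with the $(2+\varepsilon,\cdot)$ usage for bounded distances, where $2+\varepsilon$ is the factor by which $L$ is violated—is that the \emph{spanner constraint} $\alpha\bar\ell_G(u,v)$ is violated by a factor $\alpha+\varepsilon$, i.e.\ the output distances satisfy $\bar\ell_H(u,v)\le(\alpha+\varepsilon)\cdot\alpha\bar\ell_G(u,v)$. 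Writing $(1+\varepsilon_1)\alpha^2=(\alpha+\alpha\varepsilon_1)\cdot\alpha$ and setting $\varepsilon'=\alpha\varepsilon_1$ gives exactly this, which is also what the paper's own (slightly garbled) chain of inequalities ends with. So you should delete ``absorbing the harmless extra $\alpha$ factor'' and instead present the result as a stretch of $(\alpha+\varepsilon')\alpha$, i.e.\ a violation of the $\alpha$-spanner bound by $\alpha+\varepsilon'$.
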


\bibliography{spanners}

\end{document}